\newtheorem{thm}{Theorem}
\newtheorem{Claim}{Proposition}
\newtheorem{lemma}{Lemma}
\newtheorem{corollary}{Corollary}
\DeclareMathOperator{\Tr}{Tr}
\DeclareMathOperator*{\vect}{vec}
\DeclareMathOperator*{\cov}{Cov}
\DeclareMathOperator*{\card}{card}
\def\Rho{P}
\def\E{\mathbb{E}}
\def\R{\mathbb{R}}
\newcommand*\samethanks[1][\value{footnote}]{\footnotemark[#1]}
\title{Simultaneous sparse estimation of canonical vectors in
the $p \gg  N$ setting}
\author{Irina Gaynanova\thanks{Department of Statistical Science, Cornell University,
Ithaca, NY; e-mail: \textsf{ig93@cornell.edu}},
 James G. Booth\thanks{Department of Biological Statistics and Computational Biology, Cornell University,
Ithaca, NY} and Martin T. Wells\samethanks[1]}
\date{}
\newcommand\blfootnote[1]{%
  \begingroup
  \renewcommand\thefootnote{}\footnote{#1}%
  \addtocounter{footnote}{-1}%
  \endgroup
}
\begin{document}
\maketitle
\begin{singlespace}
\begin{abstract}
This article considers the problem of sparse estimation of canonical vectors in linear discriminant analysis when $p\gg N$. Several methods have been proposed in the literature that estimate one canonical vector in the two-group case. However, $G-1$ canonical vectors can be considered if the number of groups is $G$. In the multi-group context, it is common to estimate canonical vectors in a sequential fashion. Moreover, separate prior estimation of the covariance structure is often required. We propose a novel methodology for direct estimation of canonical vectors. In contrast to existing techniques, the proposed method estimates all canonical vectors at once, performs variable selection across all the vectors and comes with theoretical guarantees on the variable selection and classification consistency. First, we highlight the fact that in the $N>p$ setting the canonical vectors can be expressed in a closed form up to an orthogonal transformation. Secondly, we propose an extension of this form to the $p\gg N$ setting and achieve feature selection by using a group penalty. The resulting optimization problem is convex and can be solved using a block-coordinate descent algorithm. The practical performance of the method is evaluated through simulation studies as well as real data applications. 
\end{abstract}
\end{singlespace}
\blfootnote{
This research was partially supported by NSF-DMS 1208488, NSF-DMS 0808864 and ASAF grant FA9550-13-1-0137. We are grateful to Jacob Bien for a valuable discussion of block-coordinate descent algorithms.}

Keywords: Block-coordinate descent; Classification;  Dimension reduction; Discriminant analysis; Feature selection; Group penalization.

\section{Introduction}
Recent technological advances have generated high-dimensional data sets across a wide variety of application areas such as finance, atmospheric science, astronomy, biology and medicine. Not only do these data sets provide computational challenges, but they also breed new statistical challenges as the traditional methods no longer sufficient. Linear Discriminant Analysis (LDA) is a popular classification and data visualization tool that is used in the $N\gg p$ setting. LDA seeks the linear combinations of features that maximize Between Group Variability with respect to Within Group Variability \citep[Chapter~11]{Mardia:1979vm}. These linear combinations are called \textit{canonical vectors} and they provide a low-dimensional representation of the data by reducing the original feature space dimension $p$ to $G-1$, where $G$ is the total number of groups. 

The classical use of LDA when $p\gg N$ fails to provide useful results because of the singularity of covariance matrix and over-selection of relevant features \citep{Dudoit:2002ev, Bickel:2004wa}. As a result, the extension of LDA to high-dimensional settings has recently received a lot of literature attention. A number of these proposals result in non-sparse classifiers. \citet{Friedman:1989tm}, \citet{Krzanowski:1995ex} and \citet{Xu:2009fl} regularize the within-class covariance matrix in order to obtain a positive definite estimate. Other approaches that lead to sparse discriminant vectors have also been considered. \citet{Tibshirani:2002ht} propose the shrunken centroids methodology by adapting the naive Bayes classifier and soft-thresholding the mean vectors. \citet{Guo:2007te} combine the shrunken centroids approach with a ridge-type penalty on the within-class covariance matrix.  \citet{Witten:2011kc} apply an $\ell_1$ penalty to the Fisher's discriminant problem in order to obtain sparse discriminant vectors. \citet{Clemmensen:2011kr} use an optimal scoring approach which essentially reduces the sparse discriminant vector construction to a penalized regression problem.

In the two-group setting, \citet{Cai:2011dm} and \citet{Mai:2012bf} propose direct estimation of the canonical vector thus avoiding separate estimation of the covariance matrix. Simulations and real data applications show that the direct estimation approach results in reduced misclassification rates in comparison to alternative methods. The corresponding optimization problems can be solved efficiently for large data sets and have desirable theoretical properties. Unfortunately, the extension of two-group methods to the multi-group case is nebulous \citep[p. 658]{Hastie:2009fg}. Popular approaches include ``one-versus-all" and ``one-versus-one" methods, where the final classification assignment is usually based on the ``majority vote''. As such, computation of more vectors is required ($G$ and $G(G-1)/2$ versus $G-1$).

\citet{Witten:2011kc} and \citet{Clemmensen:2011kr} propose estimating canonical vectors in a sequential fashion in the multi-group setting: starting with the first canonical vector $v_1$, with subsequent $v_i$ found subject to orthogonality constraints. This approach is undesirable  from a computational viewpoint, as well as from an estimation perspective. Each subsequent canonical vector $v_i$ relies on all the previous estimates $v_k$ for $k<i$, hence there is a propagation of the estimation error. In addition, the corresponding optimization problems are nonconvex, hence the convergence of the optimization algorithms to the global solution is not assured. This computational burden poses additional theoretical challenges in the analysis and, as a result, these methods do not come with theoretical guarantees.

The objective of this article is to bridge the computational and theoretical gap in the literature between the two-group and multi-group methods. Inspired by the superior performance of the direct estimation methods by \citet{Cai:2011dm} and \citet{Mai:2012bf} in the two-group case, our goal is to introduce and develop a novel methodology that has the same guaranteed performance in the multi-group setting. Our proposal is based on the observation that canonical vectors can be expressed in a closed form up to an orthogonal transformation. Moreover, this transformation affects neither the classification rule nor the sparsity pattern.  
The definitive contribution of this work is the development of novel methodology that provides:
\begin{enumerate}[(a)]
\item simultaneous estimation of all $G-1$ canonical vectors without prior estimation of the covariance structure;
\item simultaneous variable selection from all canonical vectors;
\item theoretical guarantees on variable selection and classification consistency in multi-group settings.
\end{enumerate}
To our knowledge, this is the first method for multi-group sparse discriminant analysis that achieves all of these goals. 

In addition, while the motivation for our approach is quite different from~\citet{Mai:2012bf}, we show that the two methods are equivalent in the two-group setting. We use this connection to extend the theoretical results of \citet{Mai:2012bf} to the multi-group setting and show that the proposed method can consistently identify the true support of canonical vectors.

The proposed optimization problem is convex and therefore can be solved efficiently for large data sets. Our algorithm doesn't require additional regularization of the sample covariance matrix or generation of an initial starting point, as the form of the optimization problem suggests a natural choice for the grid of tuning parameters. This is not the case for the methods of  \citet{Cai:2011dm}, \citet{Witten:2011kc} or \citet{Clemmensen:2011kr}, where the appropriate grid must be carefully chosen by the user.  While these advantages are mostly computational, they simplify the implementation of the method leading to more consistent results across users. More implementation details are provided in Section~\ref{s:implementation}.

The rest of this paper is organized as follows. Section \ref{sec:WD} discusses the canonical vectors estimation problem and provides new insights into the form of the canonical vectors in the standard $N\gg p$ setting. We use these insights to propose a direct estimation procedure and describe computational aspects of the algorithm. Section~\ref{s:theory} develops bounds on the estimation error and proves that the proposed method identifies the true support of canonical vectors with a high probability. Section \ref{sec:sim} provides simulation results while Section \ref{sec:data} describes applications to real datasets. We conclude with discussion of future research in Section~\ref{sec:disc}.

\section{Methodology}\label{sec:WD}
\subsection{Notation}\label{sec:note}
For a vector $b \in \mathbb{R}^p$ we define $\|b\|_{\infty}=\max_{i=1,...,p}|b_i|$, $\|b\|_1=\sum_{i=1}^p |b_i|$ and $\|b\|_2=\sqrt{\sum_{i=1}^pb^2_i}$.
For a matrix $M\in \mathbb{R}^{n\times p}$ we define $m_i$ to be its $i$th row and $M_j$ to be its $j$th column.  We also define $\|M\|_{\infty}=\max_{i=1,...,n}\|m_i\|_1$, $\|M\|_{\infty,2}=\max_{i=1,..,n}\|m_i\|_2$, $\|M\|_{1}=\sum_{i=1}^{n}\sum_{j=1}^{p}|m_{ij}|$, $\|M\|_{F}=\sqrt{\sum_{i=1}^n\sum_{j=1}^{p}m_{ij}^2}$ and $\|M\|_{*}=\sum_{i=1}^{\min(n,p)}\sigma_{i}(M)$, where $\sigma_{i}(M)$ is the $i$th singular value of $M$. We define $\mathbb{O}^p$ to be the space of $p\times p$ orthogonal matrices $R$ such that $RR^t=R^tR=I$.
\subsection{Estimation problem}
We assume that $X_i\in \mathbb{R}^{p}$, $i=1,...,N$, are independent and come from $G$ groups with different means and the same covariance matrix, i.e. $\E(X_i|Y_i=g)=\mu_{g}$ and $\cov(X_i|Y_i=g)=\varSigma_W$,  where $Y_i\in \{1,...,G\}$.  The between-group population covariance matrix $\varSigma_B$ is defined as 
\begin{equation*}
\varSigma_B=\sum_{g=1}^G\pi_g(\mu_g-\mu)(\mu_g-\mu)^t,
\end{equation*}
 where $\pi_g=P(Y_i=g)$ are group-specific probabilities and $\mu=\sum_{i=1}^G\pi_g\mu_g$ is the overall population mean. The \textit{population canonical vectors} $\varPsi$ are defined as eigenvectors corresponding to non-zero eigenvalues of $\varSigma_W^{-1}\varSigma_B$. Although the eigenvectors are unique only up to normalization \citep{Golub:2012wp}, we take advantage of  the uniqueness of the eigenspace in defining a scale-invariant classification rule. For a new observation value of $X$, $x\in\R^{p}$, the population classification rule $h_{\varPsi}(x)$ is defined as
\begin{equation}\label{eq:prule}
h_{\varPsi}(x)=\arg\min_{1\le g\le G}(x-\mu_{g})^{t}\varPsi(\varPsi^{t}\varSigma_{W}\varPsi)^{-1}\varPsi^{t}(x-\mu_{g})-2\log\pi_{g}.
\end{equation}
The classification rule is based on the closest Mahalanobis distance in the projected space defined by $\varPsi$, after adjustment for potential discrepancy in the prior group probabilities $\pi_{g}$. Through the addition of $2\log \pi_{g}$ term, the resulting classification rule mimics the optimal classification rule under the assumption that the data comes from the multivariate normal group-conditional distribution \citep[Chapter~3.9.3]{McLachlan:1992dl}.  Our goal is to identify the eigenspace spanned by $\varPsi$ based on the sample observations $X_i \in \mathbb{R}^{p}$ and sample labels $Y_i\in \{1,...,G\}$. 

Consider the within-group sample covariance matrix $W=\frac1{N-G}\sum_{g=1}^{G}(n_g-1)S_g$ and the between-group sample covariance matrix $B=\frac{1}{N}\sum_{g=1}^Gn_g(\bar X_g-\bar X)(\bar X_g-\bar X)^t$, where $n_g$ is the number of observations in group $g$, $S_g$ is the sample covariance matrix for group $g$, $\bar X_g$ is the sample mean for group $g$ and $\bar X$ is the overall sample mean. Recall that $W$ is nonsingular when $N\gg p$ and therefore we can define the \textit{sample canonical vectors} $V$ as $G-1$ eigenvectors corresponding to non-zero eigenvalues of $W^{-1}B$ \citep[Chapter~11.5]{Mardia:1979vm}. Similarly to~\eqref{eq:prule}, the sample classification rule $\hat h_{V}(x)$ is defined as
\begin{equation}\label{eq:srule}
\hat h_{V}(x)=\arg\min_{1\le g\le G}(x-\bar X_{g})^{t}V(V^{t}WV)^{-1}V^{t}(x-\bar X_{g})-2\log\frac{n_{g}}{N}.
\end{equation}
In what follows we show that in the $N\gg p$ setting, canonical vectors can be expressed in a closed form up to an orthogonal transformation (Proposition \ref{cl:V}). Since this transformation has no effect on the classification rule (Proposition \ref{cl:clas}), it allows us to estimate $\varPsi$ directly.
\begin{Claim}\label{cl:BD}
 The following decompositions hold: $\varSigma_B=\Delta\Delta^t$ and $B=DD^t$, where for $r=1,...,G-1$ the $r$th column of $\Delta$ has the form 
\begin{equation}\label{eq:varDelta}
 \varDelta_{r}=\frac{\sqrt{\pi_{r+1}}\left(\sum_{i=1}^{r}\pi_i(\mu_i-\mu_{r+1})\right)}{\sqrt{\sum_{i=1}^{r}\pi_i\sum_{i=1}^{r+1}\pi_i}}
\end{equation}
and the $r$th column of $D$ has the form 
\begin{equation}\label{eq:D}
 D_{r}=\frac{\sqrt{n_{r+1}}\left(\sum_{i=1}^{r}n_i(\bar X_i-\bar X_{r+1})\right)}{\sqrt{N}\sqrt{\sum_{i=1}^{r}n_i\sum_{i=1}^{r+1}n_i}}.
\end{equation}
\end{Claim}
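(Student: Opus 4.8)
The plan is to prove the population identity $\varSigma_B=\Delta\Delta^t$ and then obtain $B=DD^t$ essentially for free. Writing $\hat\pi_g=n_g/N$, the sample between-group matrix is $B=\sum_{g=1}^G\hat\pi_g(\bar X_g-\bar X)(\bar X_g-\bar X)^t$ with $\bar X=\sum_{g=1}^G\hat\pi_g\bar X_g$, which is exactly $\varSigma_B$ under the substitution $(\pi_g,\mu_g,\mu)\mapsto(\hat\pi_g,\bar X_g,\bar X)$; carrying the same substitution through \eqref{eq:varDelta} yields \eqref{eq:D}, the stray $\sqrt N$ there just collecting the powers of $N$ produced by $\hat\pi_g=n_g/N$. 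So it is enough to handle $\varSigma_B$.

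For the population statement I would use the familiar ANOVA-type telescoping. Put $P_r=\sum_{i=1}^r\pi_i$ and $\bar\mu_r=P_r^{-1}\sum_{i=1}^r\pi_i\mu_i$, so $P_G=1$ and $\bar\mu_G=\mu$. Since $\sum_{i=1}^r\pi_i(\mu_i-\mu_{r+1})=P_r(\bar\mu_r-\mu_{r+1})$, the column \eqref{eq:varDelta} becomes $\varDelta_r=\sqrt{\pi_{r+1}P_r/P_{r+1}}\,(\bar\mu_r-\mu_{r+1})$, hence $\varDelta_r\varDelta_r^t=\frac{\pi_{r+1}P_r}{P_{r+1}}(\bar\mu_r-\mu_{r+1})(\bar\mu_r-\mu_{r+1})^t$. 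Define the truncated between-group matrices $T_k=\sum_{g=1}^k\pi_g(\mu_g-\bar\mu_k)(\mu_g-\bar\mu_k)^t$, so $T_1=0$ and $T_G=\varSigma_B$. The heart of the proof is the rank-one update $T_{r+1}=T_r+\varDelta_r\varDelta_r^t$: splitting $\mu_g-\bar\mu_{r+1}=(\mu_g-\bar\mu_r)+(\bar\mu_r-\bar\mu_{r+1})$ for $g\le r$ makes the cross terms vanish because $\sum_{g=1}^r\pi_g(\mu_g-\bar\mu_r)=0$, leaving $T_{r+1}=T_r+P_r(\bar\mu_r-\bar\mu_{r+1})(\bar\mu_r-\bar\mu_{r+1})^t+\pi_{r+1}(\mu_{r+1}-\bar\mu_{r+1})(\mu_{r+1}-\bar\mu_{r+1})^t$; substituting $\bar\mu_r-\bar\mu_{r+1}=\frac{\pi_{r+1}}{P_{r+1}}(\bar\mu_r-\mu_{r+1})$ and $\mu_{r+1}-\bar\mu_{r+1}=-\frac{P_r}{P_{r+1}}(\bar\mu_r-\mu_{r+1})$ and using $\pi_{r+1}+P_r=P_{r+1}$ collapses these two rank-one terms into exactly $\varDelta_r\varDelta_r^t$. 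Telescoping the updates from $T_1=0$ to $T_G=\varSigma_B$ gives $\varSigma_B=\sum_{r=1}^{G-1}\varDelta_r\varDelta_r^t=\Delta\Delta^t$.

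An equivalent and slicker route would avoid the induction entirely: write $\varSigma_B=UU^t$ with $U=\big[\sqrt{\pi_1}(\mu_1-\mu)\mid\cdots\mid\sqrt{\pi_G}(\mu_G-\mu)\big]\in\R^{p\times G}$, observe that $\varDelta_r=Uq_r$ for the explicit vector $q_r\in\R^G$ with entries $\sqrt{\pi_{r+1}\pi_i/(P_rP_{r+1})}$ for $i\le r$, $-\sqrt{P_r/P_{r+1}}$ for $i=r+1$, and $0$ otherwise, check that $q_1,\dots,q_{G-1}$ are orthonormal and each orthogonal to $w=(\sqrt{\pi_1},\dots,\sqrt{\pi_G})^t$ (a one-line computation using $\sum_i\pi_i=1$), so that $\sum_{r=1}^{G-1}q_rq_r^t=I_G-ww^t$, and conclude $\Delta\Delta^t=U(I_G-ww^t)U^t=UU^t-(Uw)(Uw)^t=UU^t=\varSigma_B$ since $Uw=\sum_g\pi_g(\mu_g-\mu)=0$. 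I would keep whichever of the two arguments turns out shorter.

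There is no real obstacle here: the statement is an exact algebraic identity and both arguments are elementary. The only thing requiring care is the bookkeeping in the rank-one update — the cross terms cancel precisely by the defining property of $\bar\mu_r$, and the scalar prefactors line up with the denominator $\sqrt{P_rP_{r+1}}$ in \eqref{eq:varDelta} only after the simplification $\pi_{r+1}+P_r=P_{r+1}$. It is also worth recording, since it is what makes this the natural factorization for Proposition \ref{cl:V}, that $\Delta$ and $D$ have exactly $G-1$ columns, consistent with $\rank(\varSigma_B)\le G-1$ and $\rank(B)\le G-1$.
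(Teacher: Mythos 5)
Your proposal is correct, and your primary argument is genuinely different from the paper's. The paper proves the identity by factoring the (weighted) centering matrix through a Helmert-type matrix: it writes $B=\frac1N X^t\{\frac1{\sqrt{n_i}}\mathsf{1}_i\}\tilde C\{\frac1{\sqrt{n_i}}\mathsf{1}_i\}^tX$ with $\tilde C=\tilde H^t\tilde H$, identifies the rows of $\tilde H$ with the classical orthogonal contrasts for unbalanced designs, and solves for the normalizing constants from $h_rh_r^t=1$. Your ``slicker route'' is essentially this same proof transported from the data matrix to the matrix of weighted mean deviations $U$: your vectors $q_r$ are exactly the rows of the adjusted Helmert matrix, $I_G-ww^t$ plays the role of $\tilde C$, and $Uw=0$ plays the role of centering — so that half of your proposal buys nothing new, though it is arguably cleaner since it bypasses the group-indicator matrix and the (slightly mis-stated in the paper) normalization of $\tilde C$. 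Your main route, the telescoping rank-one update $T_{r+1}=T_r+\varDelta_r\varDelta_r^t$ with cross terms killed by $\sum_{g\le r}\pi_g(\mu_g-\bar\mu_r)=0$, is a genuinely different induction that avoids Helmert matrices and orthogonality arguments altogether, at the cost of more bookkeeping; I have checked the prefactor computation ($P_r\pi_{r+1}^2/P_{r+1}^2+\pi_{r+1}P_r^2/P_{r+1}^2=\pi_{r+1}P_r/P_{r+1}$) and it is right. Your reduction of the sample statement to the population one by the substitution $(\pi_g,\mu_g,\mu)\mapsto(n_g/N,\bar X_g,\bar X)$, including the accounting for the extra $\sqrt N$, is also valid and matches the paper's remark that the two proofs are parallel.
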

\noindent
The low-rank decomposition of matrices $\varSigma_{B}$ and $B$ is not unique. Our choice of $\Delta$ and $D$ in Proposition~\ref{cl:BD} is motivated by the fact that these matrices can be expressed in a closed form (unlike the eigenvectors of $\varSigma_{B}$ and $B$) and have intuitive interpretation in terms of the differences between the group means. Specifically, the columns of $\Delta$ and $D$ define orthogonal contrasts between the means of $G$ groups. In the case $G=2$, $\varDelta=\sqrt{\pi_1\pi_2}(\mu_2-\mu_1)$ and $D=\frac{\sqrt{n_1n_2}}{N}(\bar X_2 - \bar X_1)$.

\begin{Claim}\label{cl:V}
 Define $\Delta$ and $D$ as in \eqref{eq:varDelta} and \eqref{eq:D}. There exists a matrix $\Rho \in \mathbb{O}^{G-1}$ such that $\varPsi=\varSigma_W^{-1}\Delta \Rho.$ Moreover, if $W$ is nonsingular, there exists a matrix $R\in \mathbb{O}^{G-1}$ such that $V=W^{-1}DR.$
\end{Claim}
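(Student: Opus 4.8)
The plan is to reduce both assertions to the eigendecomposition of a single $(G-1)\times(G-1)$ symmetric matrix; I describe the population case, the sample case being word-for-word the same with $W$, $D$, $R$ replacing $\varSigma_W$, $\Delta$, $\Rho$.

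By Proposition~\ref{cl:BD} we may write $\varSigma_B=\Delta\Delta^t$ with $\Delta\in\mathbb{R}^{p\times(G-1)}$. Because $\varSigma_W$ is nonsingular, $\rank(\varSigma_W^{-1}\varSigma_B)=\rank(\varSigma_B)=\rank(\Delta)$, and since $\varSigma_W^{-1}\varSigma_B$ is assumed to have $G-1$ nonzero eigenvalues, $\Delta$ must have full column rank $G-1$. Consider $K=\Delta^t\varSigma_W^{-1}\Delta\in\mathbb{R}^{(G-1)\times(G-1)}$. It is symmetric, and it is positive definite: for nonzero $v\in\mathbb{R}^{G-1}$ we have $v^tKv=(\Delta v)^t\varSigma_W^{-1}(\Delta v)>0$, since $\Delta v\neq 0$ by full column rank and $\varSigma_W^{-1}$ is positive definite. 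Hence there exist $\Rho\in\mathbb{O}^{G-1}$ and a diagonal matrix $\Lambda$ with strictly positive diagonal entries such that $K=\Rho\Lambda\Rho^t$.

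It remains to check that $\varSigma_W^{-1}\Delta\Rho$ is a matrix of canonical vectors. Direct computation gives
\[
\varSigma_W^{-1}\varSigma_B\,\bigl(\varSigma_W^{-1}\Delta\Rho\bigr)=\varSigma_W^{-1}\Delta\bigl(\Delta^t\varSigma_W^{-1}\Delta\bigr)\Rho=\varSigma_W^{-1}\Delta\,K\Rho=\bigl(\varSigma_W^{-1}\Delta\Rho\bigr)\Lambda ,
\]
so each column of $\varSigma_W^{-1}\Delta\Rho$ is an eigenvector of $\varSigma_W^{-1}\varSigma_B$ with eigenvalue equal to the corresponding (strictly positive) diagonal entry of $\Lambda$. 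Since $\varSigma_W^{-1}\Delta$ has full column rank $G-1$ and $\Rho$ is invertible, these $G-1$ eigenvectors are linearly independent; as $\rank(\varSigma_W^{-1}\varSigma_B)=G-1$, they span the eigenspace associated with the nonzero eigenvalues. Thus $\varSigma_W^{-1}\Delta\Rho$ is a legitimate representative of $\varPsi$, proving the first statement (it also records the implicit normalization $\varPsi^t\varSigma_W\varPsi=\Rho^tK\Rho=\Lambda$). For the sample statement, $W$ nonsingular plays the role of $\varSigma_W$ and $B=DD^t$ from Proposition~\ref{cl:BD} plays the role of $\varSigma_B$; diagonalizing $D^tW^{-1}D$ orthogonally as $R\widetilde\Lambda R^t$, the identical computation yields $V=W^{-1}DR$.

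The algebra is routine; the two points that require attention are the rank bookkeeping and the normalization convention. Knowing that $\Delta$ and $D$ have full column rank $G-1$ is exactly what makes $K$ (and its sample analogue) positive definite and guarantees that the $G-1$ constructed eigenvectors exhaust the nonzero eigenspace — without this one obtains only a containment of column spaces, not an identity of the form $\varPsi=\varSigma_W^{-1}\Delta\Rho$. The second point is being precise about the sense in which $\varPsi$, which Section~\ref{sec:WD} defines only up to normalization, equals $\varSigma_W^{-1}\Delta\Rho$; since~\eqref{eq:prule} is invariant under this choice (Proposition~\ref{cl:clas}), any eigenvector representative is admissible, and the construction above exhibits one of the required form. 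Neither point is a genuine obstacle, but they are where care is needed.
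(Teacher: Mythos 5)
Your proof is correct and follows essentially the same route as the paper's: orthogonally diagonalize $\Delta^t\varSigma_W^{-1}\Delta=\Rho\Lambda\Rho^t$ and verify directly that $\varSigma_W^{-1}\varSigma_B\bigl(\varSigma_W^{-1}\Delta\Rho\bigr)=\bigl(\varSigma_W^{-1}\Delta\Rho\bigr)\Lambda$, with the sample case handled identically. The additional rank bookkeeping you supply (full column rank of $\Delta$, positive definiteness of $K$, and the fact that the $G-1$ constructed eigenvectors exhaust the nonzero eigenspace) is a sound tightening of details the paper leaves implicit.
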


\begin{Claim}\label{cl:clas}
The population classification rule based on $\varPsi$ is the same as the population classification rule based on $\tilde \varPsi=\varSigma_W^{-1}\Delta$: $h_{\varPsi}(x)=h_{\tilde\varPsi}(x)$ for all $x\in \R^{p}$. If $W$ is nonsingular, then $\hat h_{V}(x)=\hat h_{W^{-1}D}(x)$ for all $x\in \R^{p}$.\end{Claim}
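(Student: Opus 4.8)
The plan is to reduce both claims to one linear-algebra identity: for any $A\in\R^{p\times(G-1)}$ of full column rank, the operator $A(A^{t}\varSigma_{W}A)^{-1}A^{t}$ depends only on the column space of $A$; equivalently, it is unchanged if $A$ is replaced by $AR$ for any invertible $(G-1)\times(G-1)$ matrix $R$. Granting this, the population statement is immediate: by Proposition~\ref{cl:V} we have $\varPsi=\varSigma_{W}^{-1}\Delta\Rho=\tilde\varPsi\Rho$ with $\Rho\in\mathbb{O}^{G-1}$, so $\tilde\varPsi=\varSigma_{W}^{-1}\Delta$ differs from $\varPsi$ only by the invertible right factor $\Rho$, whence the quadratic forms in \eqref{eq:prule} built from $\varPsi$ and from $\tilde\varPsi$ coincide for every $g$; since the $-2\log\pi_{g}$ terms are identical, the two objective functions agree pointwise and therefore so do their $\arg\min$'s, giving $h_{\varPsi}(x)=h_{\tilde\varPsi}(x)$ for all $x\in\R^{p}$.

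To verify the identity I would simply substitute $A\to AR$ and cancel:
$$AR\big((AR)^{t}\varSigma_{W}(AR)\big)^{-1}(AR)^{t}=AR\,R^{-1}(A^{t}\varSigma_{W}A)^{-1}(R^{t})^{-1}R^{t}A^{t}=A(A^{t}\varSigma_{W}A)^{-1}A^{t},$$
using $(XYZ)^{-1}=Z^{-1}Y^{-1}X^{-1}$ for invertible factors. This requires $A^{t}\varSigma_{W}A$ to be invertible, which holds whenever $A$ has full column rank $G-1$ and $\varSigma_{W}\succ0$. For $A=\tilde\varPsi=\varSigma_{W}^{-1}\Delta$ this is the case because $\varSigma_{W}$ is positive definite and $\Delta$ has rank $G-1$ by Proposition~\ref{cl:BD}; in particular $\tilde\varPsi^{t}\varSigma_{W}\tilde\varPsi$ is invertible, so $h_{\tilde\varPsi}$ is well defined.

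The sample statement is handled identically. When $W$ is nonsingular, Proposition~\ref{cl:V} gives $V=W^{-1}DR$ with $R\in\mathbb{O}^{G-1}$, i.e. $V=(W^{-1}D)R$; and $W^{-1}D$ has full column rank since $W\succ0$ and $D$ has rank $G-1$ by Proposition~\ref{cl:BD}, so $(W^{-1}D)^{t}W(W^{-1}D)=D^{t}W^{-1}D$ is invertible and $\hat h_{W^{-1}D}$ is well defined. Applying the cancellation above with $A=W^{-1}D$ and $\varSigma_{W}$ replaced by $W$ turns $V(V^{t}WV)^{-1}V^{t}$ into $(W^{-1}D)(D^{t}W^{-1}D)^{-1}(W^{-1}D)^{t}$, while the $-2\log(n_{g}/N)$ terms are untouched, so $\hat h_{V}(x)=\hat h_{W^{-1}D}(x)$ for all $x$. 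There is no real difficulty here; the only point requiring attention is the bookkeeping of which matrices are invertible, which is precisely where positive-definiteness of $\varSigma_{W}$ and $W$ together with the full-rank conclusions of Proposition~\ref{cl:BD} are used.
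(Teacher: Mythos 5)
Your proof is correct and is essentially the paper's own argument: both reduce to the observation that $A(A^{t}\varSigma_{W}A)^{-1}A^{t}$ is unchanged when $A$ is replaced by $AR$ (the paper cancels $RR^{-1}$ inside the quadratic form for an orthogonal $R$ from Proposition~\ref{cl:V}, exactly as you do, only writing out the sample case and declaring the population case analogous). Your version is marginally more careful in noting that any invertible $R$ suffices and in tracking invertibility of $A^{t}\varSigma_{W}A$; the one small imprecision is attributing the full column rank of $\Delta$ and $D$ to Proposition~\ref{cl:BD}, which only gives the factorization --- full rank is really the paper's standing assumption that $\varSigma_{W}^{-1}\varSigma_{B}$ has $G-1$ nonzero eigenvalues.
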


\subsection{Proposed Estimation Criterion}\label{sec:pn}
From Propositions \ref{cl:V} and \ref{cl:clas} it follows that for classification it is sufficient to estimate 
\begin{equation}\label{eq:tildepsi}
\tilde \varPsi=\varSigma_W^{-1}\varDelta
\end{equation}
 rather than $\varPsi$. 
To illustrate the motivation behind the proposed optimization problem, we first discuss our choice of the loss function and then our choice of the penalty.

Our first goal is to choose a suitable loss function that will capture the deviations of the estimator from the target $\tilde \varPsi=\varSigma_W^{-1}\varDelta$. 
We note that $\tilde \varPsi$ in \eqref{eq:tildepsi} can be defined as
\begin{equation}\label{eq:loss2}
\tilde \varPsi=\arg\min_{V\in\R^{p\times(G-1)}}\frac12\|\varSigma_W^{1/2} V-\varSigma_{W}^{-1/2}\varDelta\|^2_F=\arg\min_{V\in\R^{p\times(G-1)}}\frac12\Tr\left(V^t\varSigma_WV-2\varDelta^tV\right).
\end{equation} 
In the two-group case, $V$ is a vector and the objective function in \eqref{eq:loss2} reduces to 
$$\frac12(\varSigma_W^{1/2} V-\varSigma_{W}^{-1/2}\varDelta)^t(\varSigma_W^{1/2} V-\varSigma_{W}^{-1/2}\varDelta)=\frac12(V-\varSigma_W^{-1}\varDelta)^t\varSigma_W(V-\varSigma_W^{-1}\varDelta)=\frac12(V-\tilde \varPsi)^t\varSigma_W(V-\tilde \varPsi).$$
 This objective function is the same as the quadratic loss function considered by \citet{Rukhin:1992it}, who observed that it is invariant with respect to linear transformation of the data. Hence, we can define an estimator $\tilde V$ by substituting $\varSigma_W$ and $\varDelta$ with $W$ and $D$:
\begin{equation}\label{eq:sampleLoss}
  \tilde V=\arg\min_{V \in \mathbb {R}^{p \times (G-1)}}\frac12\Tr\left(V^tW V-2D^tV\right).
 \end{equation}
 
 Our second goal is to perform a variable selection, which in the discriminant analysis framework corresponds to estimating some entries of $\varPsi$ exactly as zero. A common penalty that is used in this context is an $\ell_1$ penalty $\|V\|_1=\sum_{i=1}^p\sum_{j=1}^{G-1}|v_{ij}|$. Although this penalty leads to sparse estimates, there is no guarantee that the features are simultaneously eliminated from all canonical vectors. In other words, although each canonical vector is sparse individually, the total number of features used may be very large. Moreover, the sparsity of the individual matrix elements is not preserved under the orthogonal rotation. Hence, the sparsity of elements of $\tilde \varPsi$ doesn't imply the sparsity of elements of $\varPsi$.

To overcome the overfitting and nonorthogonality issues, we consider the row-wise penalty $\sum_{i=1}^p\|v_i\|_2$. This penalty results in an estimate that is invariant to orthogonal transformation and eliminates features from all canonical vectors at once inducing row sparsity on the matrix $V$. Alternative penalties that achieve this goal include group SCAD and group MCP, we refer the reader to \citet{Huang:2012wg} for an overview. Our choice of $\sum_{i=1}^p\|v_i\|_2$ is motivated by the fact that it preserves convexity of the underlying optimization problem. Combining \eqref{eq:sampleLoss} with this penalty suggests an estimator $\hat V(\lambda)$, defined as
 \begin{equation}\label{eq:test1}
  \hat V(\lambda)=\arg \min_{V \in \mathbb{R}^{p \times (G-1)}}\frac12\Tr\left(V^tW V-2D^{t}V\right)+\lambda\sum_{i=1}^p\|v_i\|_2.
   \end{equation}
 When $W$ is nonsingular and $\lambda=0$, $\hat V(\lambda)=W^{-1}D$, which according to Proposition \ref{cl:V} is the matrix of sample canonical vectors up to an orthogonal rotation. Unfortunately, the objective function in~\eqref{eq:test1} can be unbounded when $W$ is singular since $\Tr\left(V^tW V-2D^{t}V\right)$ can be made arbitrarily small. Hence, an additional regularization of \eqref{eq:test1} is required.
 
 A simple solution is to use $\tilde W=W+\rho I$ instead of $W$ in \eqref{eq:test1}. This type of regularization is quite common in the LDA context and is used by \citet{Friedman:1989tm}, \citet{Guo:2007te} and \citet{Cai:2011dm}. In our case it leads to
  \begin{equation*}
  \begin{split}
  \hat V(\lambda,\rho)&=\arg \min_{V \in \mathbb{R}^{p \times G-1}}\Tr\left(\frac12V^t\tilde WV-D^tV\right)+\lambda\sum_{i=1}^p\|v_i\|_2\\
  &=\arg \min_{V \in \mathbb{R}^{p \times G-1}}\frac12\Tr(V^t WV)+\frac{\rho}2\|V-D\|^2_F+\lambda\sum_{i=1}^p\|v_i\|_2.
  \end{split}
 \end{equation*}
The second component of the objective function encourages $\hat V(\lambda,\rho)$ to be close to $D$, especially when $\rho$ is large. In contrast, $\hat V(\lambda,\rho)$ should be close to $W^{-1}D$ according to Proposition \ref{cl:V}. This discrepancy suggests that strong regularization of $W$ may have a negative affect on classification performance.

We consider an alternative approach to lower bound the objective function in~\eqref{eq:test1} by substituting $W$ in \eqref{eq:test1} with $T=W+DD^{t}=W+B$.  Such a substitution is possible because the matrices $W^{-1}B$  and $(W+B)^{-1}B$ have the same eigenvectors corresponding to non-zero eigenvalues (see Proposition~\ref{cl:eigen}). In addition, the substitution preserves the functional form of the objective in~\eqref{eq:test1} and helps us establish the connection with the previous two-group LDA methods (Section~\ref{s:connection}). The resulting estimator has the form
\begin{equation}\label{eq:Frob2}
  \hat V (\lambda)=\arg\min_{V \in \mathbb {R}^{p \times (G-1)}}\frac12\Tr\left(V^tW V\right)+\frac12\|D^tV-I\|^2_F+\lambda\sum_{i=1}^{p}\|v_{i}\|_{2},
 \end{equation}
where the objective function is convex and bounded below by zero. The three components of the objective function in \eqref{eq:Frob2} attempt to minimize the within-group variability, control the level of the between-group variability and provide regularization by inducing sparsity respectively.

\begin{Claim}\label{cl:eigen}
 Let $\varPsi$ be the matrix of eigenvectors corresponding to non-zero eigenvalues of $\varSigma^{-1}_W\varSigma_B$ and $\varUpsilon_{\rho}$ be the matrix of eigenvectors corresponding to non-zero eigenvalues of $\left(\varSigma_W+\rho\varSigma_B\right)^{-1}\rho\varSigma_B$ for some positive $\rho$. Then there exists a diagonal matrix $K_{\rho}$ such that $\varPsi=\varUpsilon_{\rho} K_{\rho}.$
\end{Claim}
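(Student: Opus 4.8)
The plan is to work with the generalized eigenvalue formulations of the two problems and exhibit an explicit, one-to-one correspondence between their nonzero-eigenvalue eigenpairs. Write the first eigenproblem as $\varSigma_B\psi=\lambda\varSigma_W\psi$ and the second as $\rho\varSigma_B u=\gamma(\varSigma_W+\rho\varSigma_B)u$; note $\varSigma_W\succ0$ and $\varSigma_B=\varDelta\varDelta^t\succeq0$, so $\varSigma_W+\rho\varSigma_B\succ0$ is invertible, $\varSigma_W^{-1}\varSigma_B$ is similar to $\varSigma_W^{-1/2}\varSigma_B\varSigma_W^{-1/2}\succeq0$, hence all $\lambda\ge0$. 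Starting from $\varSigma_B\psi=\lambda\varSigma_W\psi$ with $\lambda\neq0$, add $\rho\varSigma_B\psi$ to both sides to get $(\varSigma_W+\rho\varSigma_B)\psi=(1+\rho\lambda)\varSigma_W\psi$, so $\rho\varSigma_B\psi=\rho\lambda\varSigma_W\psi=\tfrac{\rho\lambda}{1+\rho\lambda}(\varSigma_W+\rho\varSigma_B)\psi$, and applying $(\varSigma_W+\rho\varSigma_B)^{-1}$ yields $(\varSigma_W+\rho\varSigma_B)^{-1}\rho\varSigma_B\,\psi=\tfrac{\rho\lambda}{1+\rho\lambda}\psi$. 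Here $1+\rho\lambda>0$ by the nonnegativity of $\lambda$, so the division is legitimate, and the new eigenvalue $\gamma=\rho\lambda/(1+\rho\lambda)$ is nonzero exactly when $\lambda\neq0$.

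Next I would run the argument in reverse: if $\rho\varSigma_B u=\gamma(\varSigma_W+\rho\varSigma_B)u$ with $\gamma\neq0$, rearranging gives $\rho(1-\gamma)\varSigma_B u=\gamma\varSigma_W u$, and since $\gamma=1$ would force $\varSigma_W u=0$ (impossible for $u\neq0$), we may divide to obtain $\varSigma_B u=\tfrac{\gamma}{\rho(1-\gamma)}\varSigma_W u$, i.e.\ $u$ is an eigenvector of $\varSigma_W^{-1}\varSigma_B$ with nonzero eigenvalue. Thus the two matrices have exactly the same eigenvectors associated with nonzero eigenvalues — equivalently, the same eigenspaces grouped by eigenvalue — and the map $\lambda\mapsto\rho\lambda/(1+\rho\lambda)$ is a strictly increasing bijection from $(0,\infty)$ onto $(0,1)$ (its derivative $\rho/(1+\rho\lambda)^2$ is positive), so matched eigenvalues appear in the same order. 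Since $\varSigma_B=\varDelta\varDelta^t$ with $\varDelta\in\R^{p\times(G-1)}$ of full column rank, both matrices have precisely $G-1$ nonzero eigenvalues, so the dimension count is consistent with $\varPsi,\varUpsilon_\rho\in\R^{p\times(G-1)}$.

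Finally, to obtain the stated relation $\varPsi=\varUpsilon_\rho K_\rho$ with $K_\rho$ diagonal, I would fix the columns of $\varPsi$ and of $\varUpsilon_\rho$ to be listed in the order of their eigenvalues (legitimate because the eigenvalue map above preserves order), and inside any eigenspace of multiplicity greater than one pick a common basis for both matrices; then the $r$th column of $\varPsi$ and the $r$th column of $\varUpsilon_\rho$ span the same line, so $\varPsi_r=k_r\varUpsilon_{\rho,r}$ for a nonzero scalar $k_r$, and $K_\rho=\mathrm{diag}(k_1,\dots,k_{G-1})$ does the job. I expect this last step to be the only real subtlety rather than a genuine obstacle: eigenvectors are determined only up to scaling and, when eigenvalues repeat, up to the choice of basis within each eigenspace, so the equality $\varPsi=\varUpsilon_\rho K_\rho$ should be read as holding for a suitable (and entirely natural) normalization of the eigenvector bases — which is exactly what the explicit correspondence constructed above supplies.
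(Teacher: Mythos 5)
Your proof is correct and takes essentially the same route as the paper: both rearrange the generalized eigenvalue relation $\rho\varSigma_B u=\gamma(\varSigma_W+\rho\varSigma_B)u$ into $\varSigma_W^{-1}\varSigma_B u=\frac{\gamma}{\rho(1-\gamma)}u$ and then invoke uniqueness of eigenvectors up to scaling to produce the diagonal $K_\rho$. Your version is somewhat more careful than the paper's (which works directly at the matrix level and writes $(I-\varLambda)^{-1}$ without comment), since you justify that $\gamma\neq1$, run the correspondence in both directions, and address ordering and multiplicity explicitly.
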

  
\subsection{Connection with other sparse discriminant analysis methods when $G=2$}\label{s:connection}
The motivation for our method is based on the eigenstructure of the discriminant analysis problem in the multi-group setting, however it has a direct connection with the two-group methods previously proposed in the literature. When $G=2$, $V$ is a vector in $\mathbb{R}^p$ and \eqref{eq:Frob2} takes the form
\begin{equation*}
  \hat V(\lambda)=\arg \min_{V \in \mathbb{R}^p}\frac12V^tWV+\frac12(D^tV-1)^2+\lambda\|V\|_1.
\end{equation*}

\begin{Claim}\label{cl:Mai}
Consider $\hat V_{DSDA}(\lambda)$ \citep{Mai:2012bf}, defined as
\begin{equation*}
\hat V_{DSDA}(\lambda)=\arg\min_{\beta_0\in \mathbb{R},V \in \mathbb{R}^p} \frac1{2N}\sum_{i=1}^N(y_i-\beta_0-X_i^tV)^2+\lambda\|V\|_1,
\end{equation*}
where $y_i=-\frac{N}{n_1}$ if  the $i$th subject is in group 1 and $y_i=\frac{N}{n_2}$ otherwise.
Then
 $$\hat V(\lambda)=\frac{N}{\sqrt{n_1n_2}}\hat V_{DSDA}\left(\frac{N}{\sqrt{n_1n_2}}\lambda\right).$$
\end{Claim}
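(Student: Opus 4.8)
\emph{Proof proposal.} The plan is to show that, after eliminating the intercept, the DSDA criterion coincides with the $G=2$ instance of~\eqref{eq:Frob2} up to an affine change of variables in $V$ together with a rescaling of the penalty parameter; the scalar $N/\sqrt{n_1n_2}$ then falls out of the bookkeeping.

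First I would profile out $\beta_0$. For fixed $V$ the DSDA criterion is a strictly convex quadratic in $\beta_0$, minimized at $\widehat\beta_0=\bar y-\bar X^{t}V$; since $\sum_{i=1}^N y_i=n_1(-N/n_1)+n_2(N/n_2)=0$ we have $\bar y=0$, so $\widehat\beta_0=-\bar X^{t}V$ and the residual at observation $i$ becomes $y_i-(X_i-\bar X)^{t}V$. Expanding the square and dropping the $V$-free term $\tfrac1{2N}\sum_i y_i^2$, the profiled objective is
\[
\tfrac12\,V^{t}\widehat\Sigma\,V-\widehat\delta^{t}V+\lambda\|V\|_1,\qquad
\widehat\Sigma=\tfrac1N\!\sum_i(X_i-\bar X)(X_i-\bar X)^{t},\qquad
\widehat\delta=\tfrac1N\!\sum_i y_i(X_i-\bar X).
\]
Using $\sum_i y_i=0$ one gets $\widehat\delta=\tfrac1N\sum_i y_iX_i=\bar X_2-\bar X_1$, which by the closed form~\eqref{eq:D} (with $G=2$) is a known scalar multiple of the single column $D$. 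For the quadratic term I would invoke the scatter-matrix (ANOVA) identity $\sum_i(X_i-\bar X)(X_i-\bar X)^{t}=\sum_g\sum_{i:Y_i=g}(X_i-\bar X_g)(X_i-\bar X_g)^{t}+\sum_g n_g(\bar X_g-\bar X)(\bar X_g-\bar X)^{t}$, which rewrites $\widehat\Sigma$ through $W$ and $B$, and then use $B=DD^{t}$ from Proposition~\ref{cl:BD}. Substituting these identities turns the profiled DSDA objective into the same functional shape $\tfrac12 V^{t}WV+\tfrac12(D^{t}V)^2-(\text{const})\,D^{t}V+\lambda\|V\|_1$ (up to a $V$-free constant) as the $G=2$ form $\tfrac12 V^{t}WV+\tfrac12(D^{t}V-1)^2+\lambda\|V\|_1$ of~\eqref{eq:Frob2}.

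The final step is the rescaling: put $V=\alpha U$ for the appropriate positive constant $\alpha$ built from $n_1,n_2,N$, and simultaneously scale the DSDA penalty parameter by $\alpha$. Under this joint substitution the profiled DSDA objective equals a positive multiple of the~\eqref{eq:Frob2} objective in $U$ plus a $V$-free constant; since multiplying by a positive scalar and adding a constant leave the minimizer unchanged, the minimizing $U$ is exactly $\hat V(\lambda)$, and unwinding $V=\alpha U$ yields $\hat V(\lambda)=\tfrac{N}{\sqrt{n_1n_2}}\hat V_{DSDA}\!\big(\tfrac{N}{\sqrt{n_1n_2}}\lambda\big)$.

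The main obstacle is the bookkeeping in the middle step: one must track every normalization carried by the response levels $y_i\in\{-N/n_1,N/n_2\}$, by the closed form~\eqref{eq:D} of $D$, and by the divisors $N-G$ and $N$ in the definitions of $W$ and $B$, and check that they line up so that the coefficient of $V^{t}WV$ and the coefficient of the cross term $D^{t}V$ scale consistently under a single scalar $\alpha$, producing precisely the factor $N/\sqrt{n_1n_2}$. A secondary point worth a remark: when $p\gg N$ the matrices $W$ and $T=W+B$ are singular, so the minimizers above need not be unique; one therefore argues at the level of the (possibly non-singleton) solution sets, i.e.\ shows the two first-order/KKT systems are equivalent after the substitution $V=\alpha U$, which again follows from the affine equivalence of the objectives.
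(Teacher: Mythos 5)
Your proposal is correct and follows essentially the same route as the paper: reduce the DSDA least-squares criterion to the quadratic form $\tfrac12 V^{t}(W+DD^{t})V-\tfrac{N}{\sqrt{n_1n_2}}D^{t}V+\lambda\|V\|_1$ and then apply the scaling substitution $V\mapsto\tfrac{\sqrt{n_1n_2}}{N}V$ with a matched rescaling of $\lambda$. The only difference is that you spell out the intercept-profiling and ANOVA-decomposition step (and flag the $N$ versus $N-G$ normalization bookkeeping), which the paper simply asserts in its first display.
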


Furthermore, \citet{Mai:2012ea} show an equivalence between the three methods for sparse discriminant analysis in the two-group setting: \citet{Wu:2009vk}, \citet{Clemmensen:2011kr} and \citet{Mai:2012bf}. It follows that our method belongs to the same class, however it can be applied to any number of groups. Thus, it can be viewed as a multi-group generalization of this class of methods.

The optimization problem in \eqref{eq:Frob2} corresponds to the choice of $\rho=1$ in Proposition~\ref{cl:eigen}. In general, any $\rho>0$ leads to 
\begin{equation}\label{eq:T}
  \hat V(\lambda,\rho)=\arg \min_{V \in \mathbb{R}^{p \times G-1}}\frac12\Tr(V^tWV)+\frac{\rho}2\|D^tV-I\|^2_F+\lambda\sum_{i=1}^p\|v_i\|_2.
 \end{equation}
When $\rho \to \infty$, \eqref{eq:T} is equivalent to 
\begin{equation}\label{eq:Tinf}
  \hat V(\lambda,\rho=\infty)=\arg \min_{D^tV=I}\frac12\Tr(V^tWV)+\lambda\sum_{i=1}^p\|v_i\|_2,
 \end{equation}
hence the optimization problem \eqref{eq:T} can be considered a convex relaxation to \eqref{eq:Tinf} for large values of $\rho$. When the number of groups is two, the optimization problem \eqref{eq:T} is equivalent to the proposal of \citet{Fan:2012ud}, who also observe the connection between ~\eqref{eq:T} and~\eqref{eq:Tinf}. They perform a simulation study to assess the effect of the tuning parameter $\rho$ and note that its value doesn't significantly affect the classification results as long as the best $\lambda$ is chosen for each $\rho$. They keep the value of $\rho$ at a fixed level $\rho=10$. 

\subsection{Optimization Algorithm}\label{s:algorithm}
The optimization problem in \eqref{eq:Frob2} is convex with respect to $V$ and therefore can be solved efficiently using a block-coordinate descent algorithm. Alternative algorithms include proximal gradient methods and interior-point methods, we refer the reader to \citet{Bach:2011ty} for the overview of convex optimization with sparsity-inducing norms. We chose to use the block-coordinate descent algorithm as it takes advantage of warm starts when solving for a range of tuning parameters and is one of the fastest algorithms for smooth losses with separable regularizers  \citep{Bach:2011ty, Qin:2013dg}.
 Define 
\begin{equation}\label{eq:Tdef}
T=W+B=W+DD^T.
\end{equation} 
By convexity, the solution to \eqref{eq:Frob2} satisfies the KKT conditions \citep[Chapter~5.5]{Boyd:2004uz}. Differentiating \eqref{eq:Frob2} with respect to the $(G-1) \times 1$ vector $v_j$ formed by the $j$th row of $V$ leads to
\begin{equation}\label{eq:KKT}
 V^tT_j-d_j+\lambda u_j=0,
\end{equation}
where $T_j$ is the $j$th column of matrix $T$ in \eqref{eq:Tdef}, $d_j$ is a $(G-1) \times 1$ vector formed by the $j$th row of matrix $D$ in \eqref{eq:D} and $u_{j}$ is the subgradient of $\|v_{j}\|_{2}$:
\begin{equation*}
 u_j= \begin{cases}
       \frac{v_j}{\|v_j\|_2}, &\text{if $v_j \neq 0$};\\
       \in \{u: \|u\|_2\le 1\}, &\text{if $v_j = 0$}.
      \end{cases}
\end{equation*}
Solving \eqref{eq:KKT} further with respect to $v_j$ leads to
$
 v_j=\left(d_j-\sum_{i\neq j}t_{ij}v_i-\lambda u_j\right)/t_{jj},
$
where $t_{ij}$ are the elements of matrix $T$. This leads to the block-coordinate descent algorithm. 
\begin{algorithm}[t]
Given: $k=1$, $V^{(0)}$
\begin{algorithmic}
\Repeat
\State $\bar V\gets V^{(k-1)}$
	\For{$j=1$ \textbf{to} $p$}
 	 \State $v^{(k)}_{j} \gets \left(1-\frac{\lambda}{\|d_j-\sum_{i\neq j}t_{ij}\bar v_i\|_2}\right)_{+}\left(d_j-\sum_{i\neq j}t_{ij}\bar v_i\right)/t_{jj}$
	\EndFor
\State $k\gets k+1$
\Until{$k=k_{\max}$ or $V^{(k)}$ satisfies stopping criterion.}
\end{algorithmic}
\caption{Block-coordinate descent algorithm.}
\label{algorithm}
\end{algorithm}
\noindent
Note that if $\lambda \ge \max_{1\le i\le p}\|d_i\|_2$, then $\hat V(\lambda)=0$. Moreover, if $T$ is non-singular, by applying the vectorization operator \eqref{eq:Frob2} can be rewritten as
 \begin{equation*}
  \hat V(\lambda)=\arg \min_{V \in \mathbb{R}^{p \times G-1}}\frac12\left\|\vect(D^tT^{-1/2})-(T^{1/2}\otimes I_{G-1})\vect(V^t)\right\|^2_2+\lambda\sum_{i=1}^p\|v_i\|_2.
 \end{equation*}
This formulation corresponds to a group lasso optimization problem \citep{Yuan:2006wj} with the response vector $\vect(D^tT^{-1/2})$ and the design matrix $T^{1/2}\otimes I_{G-1}$. Due to the form of the design matrix, each block subproblem can be solved in a closed form, making the implementation of block-coordinate descent algorithm straightforward.

\section{Theory}\label{s:theory}
In this section we analyze the variable selection and classification performance of the estimator $\hat V(\lambda)$ defined in \eqref{eq:Frob2}. In Section \ref{s:connection} we established an equivalence between our proposal and the proposal of \citet{Mai:2012bf} for the two-group case. We use this connection to extend the variable selection consistency results of \citet{Mai:2012bf} to the multi-group case. In particular, to prove Theorem~\ref{t:main1}, we derive Lemmas~\ref{l:meand}~and~\ref{l:T} that serve as a multi-group version of Lemma~A1 in~\citet{Mai:2012bf}. We also show that the variable selection consistency implies classification consistency.

Denote the support of $\tilde \varPsi$ by $A=\{j:\|\tilde \psi_j\|_2\neq0\}$, where $\tilde \psi_j$ is the $j$th row of $\tilde \varPsi$, and assume that the support is sparse: $s<<p$ where $s=\card(A)$. Denote the support of $\hat V(\lambda)$ by $\hat A=\{j:\|\hat v_j(\lambda)\|_2\neq 0\}$. Let $\varSigma=\varSigma_W+\varSigma_B$, $\varPsi'=\varSigma^{-1}\Delta$ and note that $\{j:\|\psi'_j\|_2\neq0\}=A$, i.e. $\tilde \varPsi$ and $\varPsi'$ have the same support. Furthermore, let
$\kappa=\|\varSigma_{A^cA}\varSigma_{AA}^{-1}\|_{\infty}$, $\phi=\|\varSigma^{-1}_{AA}\|_{\infty}$, $\varPsi_{\min}=\min_{i\in A} \|\psi'_i\|_2$ and $\delta=\|\varDelta\|_{\infty,2}$, where $\varSigma_{AA}$ is the sub-matrix of $\varSigma$ formed by the intersection of the rows and columns in $A$.  In Theorem \ref{t:main1} we establish lower bounds on $P(A=\hat A)$ and $ P\left(\|\hat V(\lambda)_A-\varPsi'_{A}\|_{\infty,2}\le 2\phi \lambda\right)$.
\begin{thm}\label{t:main1}
 Assume $\kappa<1$ and $\left(X_i|Y_i=g\right)\sim N(\mu_g,\varSigma_W)$, $i=1,..,N$. Then 
 \begin{enumerate}
  \item For any $\lambda>0$ and positive $\epsilon\le\frac{\lambda(1-\kappa)}{(\kappa+1)(\phi\delta+1)+2\phi\lambda}$, $\hat V(\lambda)_{A^C}=0$ with a probability of at least $1-t_1$, where
  \begin{equation*}
   t_1=c_{1}ps\exp(-c_2Ns^{-2}\epsilon^2)+2(G-1)p\exp(-c_3N\epsilon^2).
  \end{equation*}
\item For any $\lambda< \frac{\varPsi_{\min}}{\phi}$ and $\epsilon < \frac{\varPsi_{\min}-\lambda\phi}{\phi(1+\phi\delta+\varPsi_{\min})}$ none of the elements of $\hat V(\lambda)_A$ are zero with a probability of at least $1-t_2$, where
  \begin{equation*}
   t_2=c_{1}s^2\exp(-c_2Ns^{-2}\epsilon^2)+2(G-1)s\exp(-c_3N\epsilon^2).
  \end{equation*}
\item For any positive $\epsilon<\frac{\lambda}{1+\phi\delta+2\phi\lambda}$
\begin{equation*}
 P\left(\|\hat V(\lambda)_A-\varPsi'_{A}\|_{\infty,2}\le 2\phi \lambda\right)\ge 1-c_{1}s^2\exp(-c_2Ns^{-2}\epsilon^2)+2(G-1)s\exp(-c_3N\epsilon^2).
\end{equation*}
\end{enumerate}
\end{thm}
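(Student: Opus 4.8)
The plan is to run a primal--dual witness argument of the type used by \citet{Mai:2012bf} for the lasso, adapted to the group penalty and to the matrix-valued unknown. Rewriting \eqref{eq:Frob2} with $T=W+DD^t$ as in \eqref{eq:Tdef}, the objective becomes $\frac12\Tr(V^tTV)-\Tr(D^tV)+\mathrm{const}+\lambda\sum_i\|v_i\|_2$, so $\hat V$ together with a matrix $U$ whose $j$th row is a subgradient of $\|v_j\|_2$ is a minimizer iff $TV-D+\lambda U=0$, i.e.\ the rowwise KKT system \eqref{eq:KKT}. The crucial deterministic fact is the population identity coming from $\varPsi'=\varSigma^{-1}\Delta$ and the fact that the support of $\varPsi'$ is $A$: restricting $\varSigma\varPsi'=\Delta$ to the rows of $A$ gives $\varSigma_{AA}\varPsi'_A=\Delta_A$, and restricting to the rows of $A^{c}$ gives $\varSigma_{A^{c}A}\varPsi'_A=\Delta_{A^{c}}$. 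Hence the ``irrepresentability residual'' $\Delta_{A^{c}}-\varSigma_{A^{c}A}\varSigma_{AA}^{-1}\Delta_A$ is exactly zero, which is precisely what lets the assumption $\kappa=\|\varSigma_{A^{c}A}\varSigma_{AA}^{-1}\|_{\infty}<1$ do its work.

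For part 1 I would build the oracle estimator $\hat V^{o}$ by setting $\hat V^{o}_{A^{c}}=0$ and letting $\hat V^{o}_A$ solve the restricted convex problem over the rows in $A$; by its KKT condition, $\hat V^{o}_A=T_{AA}^{-1}(D_A-\lambda\hat U_A)$ for a valid subgradient $\hat U_A$ with $\|\hat U_A\|_{\infty,2}\le1$. It then suffices to verify strict dual feasibility on $A^{c}$, namely $\|D_{A^{c}}-T_{A^{c}A}\hat V^{o}_A\|_{\infty,2}<\lambda$. Writing $D_{A^{c}}-T_{A^{c}A}\hat V^{o}_A=(D_{A^{c}}-T_{A^{c}A}T_{AA}^{-1}D_A)+\lambda T_{A^{c}A}T_{AA}^{-1}\hat U_A$ and bounding the second piece by $\lambda\|T_{A^{c}A}T_{AA}^{-1}\|_{\infty}$, the task reduces to showing that (i) $\|T_{A^{c}A}T_{AA}^{-1}\|_{\infty}$ stays just above $\kappa$ hence below $1$, and (ii) the ``bias'' $\|D_{A^{c}}-T_{A^{c}A}T_{AA}^{-1}D_A\|_{\infty,2}$ is small, using that its population counterpart vanishes. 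Both follow from the concentration Lemmas~\ref{l:meand} and~\ref{l:T} (the multi-group replacement for Lemma~A1 of \citet{Mai:2012bf}), which give $\|D-\Delta\|_{\infty,2}\le\epsilon$ and $\|T_{\cdot A}-\varSigma_{\cdot A}\|_{\infty}\le\epsilon$ with the stated probability, together with a Neumann-series bound $\|T_{AA}^{-1}\|_{\infty}\le\phi/(1-\phi\epsilon)$, the elementary identity $T_{AA}^{-1}-\varSigma_{AA}^{-1}=T_{AA}^{-1}(\varSigma_{AA}-T_{AA})\varSigma_{AA}^{-1}$, and repeated use of the triangle inequality and the submultiplicativity $\|MN\|_{\infty,2}\le\|M\|_{\infty}\|N\|_{\infty,2}$. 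Tracking the constants $\phi,\delta,\kappa$ through these inequalities, the total deviation is of the form $\lambda\kappa+\epsilon(1+\kappa)(1+\phi\delta+\phi\lambda)/(1-\phi\epsilon)$ up to $\epsilon^2$ terms, which is below $\lambda$ exactly when $\epsilon\le\frac{\lambda(1-\kappa)}{(\kappa+1)(\phi\delta+1)+2\phi\lambda}$; on the same event $T_{AA}\succ0$, so the primal--dual witness construction certifies that any solution $\hat V(\lambda)$ of \eqref{eq:Frob2} has $\hat V(\lambda)_{A^{c}}=0$.

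Parts 2 and 3 then follow from a perturbation bound for $\hat V(\lambda)_A$, which on the event of part 1 equals $\hat V^{o}_A=T_{AA}^{-1}(D_A-\lambda\hat U_A)$. Using $\varPsi'_A=\varSigma_{AA}^{-1}\Delta_A$, decompose $\hat V^{o}_A-\varPsi'_A=T_{AA}^{-1}(D_A-\Delta_A)+(T_{AA}^{-1}-\varSigma_{AA}^{-1})\Delta_A-\lambda T_{AA}^{-1}\hat U_A$ and bound the three summands in $\|\cdot\|_{\infty,2}$ by $\tfrac{\phi\epsilon}{1-\phi\epsilon}$, $\tfrac{\phi^2\delta\epsilon}{1-\phi\epsilon}$ and $\tfrac{\lambda\phi}{1-\phi\epsilon}$ respectively; their sum is at most $2\phi\lambda$ precisely when $\epsilon<\frac{\lambda}{1+\phi\delta+2\phi\lambda}$, which is part 3. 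Only the $A$-indexed blocks $T_{AA}$ and $D_A$ enter here, so the union bound runs over $O(s^2)$ and $O(s)$ quantities, yielding the smaller failure probability $t_2$. For part 2, on this event $\|\hat v^{o}_i\|_2\ge\|\psi'_i\|_2-\|\hat V^{o}_A-\varPsi'_A\|_{\infty,2}\ge\varPsi_{\min}-\|\hat V^{o}_A-\varPsi'_A\|_{\infty,2}$ for every $i\in A$, and the same bound shows this is strictly positive as soon as the perturbation is below $\varPsi_{\min}$, i.e.\ as soon as $\lambda<\varPsi_{\min}/\phi$ and $\epsilon<\frac{\varPsi_{\min}-\lambda\phi}{\phi(1+\phi\delta+\varPsi_{\min})}$; combined with part 1 this gives that none of the rows of $\hat V(\lambda)_A$ vanish.

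The main obstacle is the pair of concentration Lemmas~\ref{l:meand} and~\ref{l:T}. Unlike in the lasso setting, $T=W+B$ is a quadratic function of the Gaussian group sample means and of the pooled within-group covariance, so controlling $\|T_{\cdot A}-\varSigma_{\cdot A}\|_{\infty}$ and $\|D-\Delta\|_{\infty,2}$ requires sub-exponential (product-of-Gaussian) tail bounds, union-bounded over $O(ps)$ (resp.\ $O(s^2)$) entries; because $\|\cdot\|_{\infty}$ sums $s$ absolute values per row, the effective deviation is inflated by a factor of order $s$, which is the source of the $s^{-2}$ in the exponents $\exp(-c_2Ns^{-2}\epsilon^2)$. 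Once those lemmas are in place everything else is deterministic book-keeping; the only remaining care is algebraic, namely checking that with the stated $\epsilon$-thresholds the accumulated error terms stay below $\lambda(1-\kappa)$ in part 1 and below $\varPsi_{\min}$ in part 2.
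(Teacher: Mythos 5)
Your proposal is correct and follows essentially the same route as the paper, which simply invokes the primal--dual witness argument of Theorem~1 in Mai et al.\ (2012) together with the auxiliary concentration lemmas (Lemmas~\ref{l:meand}--\ref{lemma:Tw}) as the multi-group substitutes for their Lemma~A1. You have in fact spelled out more of the mechanics than the paper does --- the oracle construction $\hat V^{o}_A=T_{AA}^{-1}(D_A-\lambda\hat U_A)$, the vanishing population irrepresentability residual, the strict dual feasibility check, and the $\|\cdot\|_{\infty,2}$ perturbation bounds --- all of which match the structure implicit in the cited proof and reproduce the stated $\epsilon$-thresholds and failure probabilities.
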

While the motivation for the proposed optimization problem doesn't rely on the normality assumption, the normality assumption does simplify the proof. We discuss possible extensions to the non-normal case in the online supplement in Section~S8. We further use Theorem~\ref{t:main1} to establish variable selection consistency of the estimator $\hat V(\lambda)$ defined in \eqref{eq:Frob2}. Specifically, Theorem~\ref{t:main1} implies asymptotic conditions under which $P(A=\hat A)\to 1$, which coincide with asymptotic conditions for the two-group case \citep{Mai:2012bf}:
\begin{itemize}
\item[(C1)] $N\to \infty$, $p \to \infty$, $G=O(1)$ and $\frac{\log(ps) s^2}{N} \to 0$.
\item[(C2)]  $\sqrt{ \frac{\log(ps) s^2}{N}} << \lambda_N << \varPsi_{\min}$.
\end{itemize}
\smallskip
\begin{corollary}\label{cor}
If (C1) and (C2) hold, then
$P(A=\hat A)\to 1$.
\end{corollary}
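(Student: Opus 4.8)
The plan is to derive Corollary~\ref{cor} as a direct asymptotic consequence of parts~1 and~2 of Theorem~\ref{t:main1}, since $A=\hat A$ exactly when $\hat V(\lambda)_{A^c}=0$ (no false inclusions, controlled by part~1) and no element of $\hat V(\lambda)_A$ vanishes (no false exclusions, controlled by part~2). So $P(A=\hat A)\ge 1-t_1-t_2$, and it suffices to show that under (C1) and (C2) one can choose a valid $\epsilon=\epsilon_N$ for which both $t_1\to 0$ and $t_2\to 0$.

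\medskip

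First I would fix the dependence on $\lambda_N$. In part~1 the admissible range is $\epsilon\le \lambda(1-\kappa)/\big((\kappa+1)(\phi\delta+1)+2\phi\lambda\big)$; treating $\kappa,\phi,\delta$ as fixed constants (they do not grow under (C1) in the standard analysis, matching~\citet{Mai:2012bf}) and using $\lambda_N\ll\varPsi_{\min}$ so $\lambda_N$ stays bounded, this upper bound is of order $\lambda_N$ up to constants. In part~2 the constraints are $\lambda<\varPsi_{\min}/\phi$ — satisfied eventually by (C2) — and $\epsilon<(\varPsi_{\min}-\lambda\phi)/\big(\phi(1+\phi\delta+\varPsi_{\min})\big)$, which is bounded below by a positive constant once $\lambda_N\phi$ is bounded away from $\varPsi_{\min}$; again by (C2) this holds eventually. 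Hence I would choose $\epsilon_N$ to be a small constant multiple of $\lambda_N$, say $\epsilon_N=c\,\lambda_N$ with $c$ small enough that both admissibility conditions hold for all large $N$.

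\medskip

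Next I would plug $\epsilon_N=c\lambda_N$ into the tail bounds. Both $t_1$ and $t_2$ are sums of two terms of the shapes $c_1 \cdot (\text{poly in }p,s)\cdot\exp(-c_2 N s^{-2}\epsilon_N^2)$ and $(G-1)\cdot(\text{poly})\cdot\exp(-c_3 N\epsilon_N^2)$; since $t_1$ dominates $t_2$ (larger polynomial prefactors: $ps$ vs.\ $s^2$, $p$ vs.\ $s$), it suffices to kill $t_1$. Writing the first term as $\exp\big(\log c_1+\log(ps)-c_2 N s^{-2}c^2\lambda_N^2\big)$, its exponent tends to $-\infty$ provided $N s^{-2}\lambda_N^2/\log(ps)\to\infty$, i.e.\ $\lambda_N^2\gg s^2\log(ps)/N$, which is exactly the left inequality in (C2). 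For the second term, $\exp\big(\log(G-1)+\log p - c_3 N c^2\lambda_N^2\big)\to 0$ because $G=O(1)$ and $N\lambda_N^2\gg s^2\log(ps)\ge \log p$. Since $G=O(1)$, the $(G-1)$ factors are harmless. Therefore $t_1+t_2\to 0$, giving $P(A=\hat A)\to 1$.

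\medskip

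The only genuinely delicate point — the ``main obstacle'' — is bookkeeping the behavior of $\kappa$, $\phi$, $\delta$, and $\varPsi_{\min}$: strictly, Theorem~\ref{t:main1} as stated is non-asymptotic, so to apply it uniformly along $N\to\infty$ one must either assume these quantities are uniformly bounded (as is implicit in matching the two-group conditions of~\citet{Mai:2012bf}) or carry their growth rates through the exponents and the admissible ranges for $\epsilon$. I would state the boundedness of $\kappa<1-c$, $\phi$, $\delta$ and boundedness-away-from-zero of the relevant denominators explicitly as part of the regime underlying (C1)--(C2) (these are the standard irrepresentability-type assumptions), and then the argument above goes through verbatim. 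Everything else is the routine ``plug a rate into an exponential tail'' calculation sketched above.
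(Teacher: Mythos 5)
Your proposal is correct and follows the same route as the paper, whose proof of this corollary is simply ``follows directly from parts 1 and 2 of Theorem~1''; you have filled in the omitted details (choosing $\epsilon_N$ proportional to $\lambda_N$ and checking that $t_1,t_2\to 0$ under (C1)--(C2)) exactly as intended. Your remark that $\kappa$, $\phi$, $\delta$ and $\varPsi_{\min}$ must implicitly be treated as bounded for the asymptotic statement is a fair observation about an assumption the paper leaves tacit, but it does not change the argument.
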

We also show that under the same asymptotic conditions the sample classification rule based on $\hat V$ coincides with the population classification rule $h_{\varPsi}$ defined in \eqref{eq:prule}. Let $X\in \R^{p}$ be a new observation with a value $x\in \R^{p}$.
\begin{corollary}\label{cor:clas}
If (C1) and (C2) hold, then
$P(\|\hat V(\lambda)_A-\varPsi'_{A}\|_{\infty,2}\le 2\phi \lambda_{N})\to 1$.
Moreover, if $\lambda_{N}\to 0$, then $P\left(\hat h_{\hat V}(x)=h_{\varPsi}(x)\right)\to 1$.
\end{corollary}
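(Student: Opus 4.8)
The plan is to treat the two assertions separately. The $\|\cdot\|_{\infty,2}$ bound is immediate from part~3 of Theorem~\ref{t:main1}: choose $\epsilon_N$ to be a suitable constant multiple of $\lambda_N$ so that the constraint $\epsilon_N<\lambda_N/(1+\phi\delta+2\phi\lambda_N)$ is met (possible since $\phi$, $\delta$ and $\lambda_N$ are bounded), observe that by (C2) both $Ns^{-2}\epsilon_N^{2}$ and $N\epsilon_N^{2}$ grow faster than $\log(ps)\ge\log p$, and note that with $G=O(1)$ from (C1) this forces $c_1s^{2}\exp(-c_2Ns^{-2}\epsilon_N^{2})\to0$ and $2(G-1)s\exp(-c_3N\epsilon_N^{2})\to0$; hence $P(\|\hat V(\lambda_N)_A-\varPsi'_A\|_{\infty,2}\le 2\phi\lambda_N)\to1$.

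For the classification statement I would first replace the target $h_{\varPsi}$ by $h_{\varPsi'}$ with $\varPsi'=\varSigma^{-1}\varDelta$: by Proposition~\ref{cl:clas}, $h_{\varPsi}=h_{\tilde\varPsi}$ with $\tilde\varPsi=\varSigma_W^{-1}\varDelta$, and $\tilde\varPsi$ differs from $\varPsi'$ by an invertible right factor (Sherman--Morrison--Woodbury; cf.\ Proposition~\ref{cl:eigen} with $\rho=1$); since the matrix $\varPsi(\varPsi^{t}\varSigma_W\varPsi)^{-1}\varPsi^{t}$ entering \eqref{eq:prule} is unchanged when $\varPsi$ is post-multiplied by an invertible matrix, the computation behind Proposition~\ref{cl:clas} gives $h_{\varPsi}=h_{\varPsi'}$. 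Next I would invoke Corollary~\ref{cor} to pass to the event $\{\hat A=A\}$, which has probability $\to1$; on this event $\hat V(\lambda_N)$ and $\varPsi'$ are both supported on the $s$ coordinates of $A$, so both $\hat h_{\hat V}(x)$ and $h_{\varPsi'}(x)$ only involve the corresponding $s$-dimensional subproblem, and it suffices to show these agree with probability $\to1$.

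On $\{\hat A=A\}$ the data-dependent ingredients concentrate on their population values: $\bar X_{g,A}\to\mu_{g,A}$, $W_{AA}\to\varSigma_{W,AA}$ and $n_g/N\to\pi_g$ at rate $O_p(\sqrt{s/N})$ (the sub-Gaussian and Wishart-type bounds are those of Lemmas~\ref{l:meand}~and~\ref{l:T} already used for Theorem~\ref{t:main1}, and $s/N\to0$ by (C1)), while the first assertion gives $\|\hat V_A-\varPsi'_A\|_{\infty,2}\le 2\phi\lambda_N\to0$ w.h.p. Since $\varPsi'_A$ has full column rank $G-1$ and $\varSigma_{W,AA}\succ0$, the $(G-1)\times(G-1)$ matrix $\hat V_A^{t}W_{AA}\hat V_A$ stays bounded away from singularity w.h.p., so $\hat V_A(\hat V_A^{t}W_{AA}\hat V_A)^{-1}\hat V_A^{t}$ is, on the good event, a continuous function of $(\hat V_A,W_{AA})$ near the population point; feeding in the convergences above, the sample discriminant scores $(x-\bar X_{g})^{t}\hat V(\hat V^{t}W\hat V)^{-1}\hat V^{t}(x-\bar X_{g})-2\log(n_g/N)$ should converge in probability to the population scores $(x-\mu_g)^{t}\varPsi'((\varPsi')^{t}\varSigma_W\varPsi')^{-1}(\varPsi')^{t}(x-\mu_g)-2\log\pi_g$, uniformly over the $G=O(1)$ groups. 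Because $X$ comes from the (continuous) mixture, almost surely $x$ avoids the finitely many decision boundaries, so the population minimizer over $g$ is unique with a strictly positive margin and the sample $\arg\min$ coincides with it with probability $\to1$, i.e.\ $\hat h_{\hat V}(x)=h_{\varPsi'}(x)=h_{\varPsi}(x)$.

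The main obstacle is making the ``continuous function'' step quantitative. Writing the quadratic part of the difference of two discriminant scores as $(\bar X_{g',A}-\bar X_{g,A})^{t}\hat M\,(2x_A-\bar X_{g,A}-\bar X_{g',A})$ with $\hat M=\hat V_A(\hat V_A^{t}W_{AA}\hat V_A)^{-1}\hat V_A^{t}$, the delicate term is the one in which $\hat M$ is replaced by its population analogue, tested against the mean contrast $\mu_{g',A}-\mu_{g,A}$ and against $2x_A-\mu_{g,A}-\mu_{g',A}$, both of Euclidean norm $O(\sqrt s)$. The estimation error of $\hat V_A$ is controlled only row-wise (in $\|\cdot\|_{\infty,2}$), and converting this to an operator-norm bound on the $s\times(G-1)$ matrix $\hat V_A-\varPsi'_A$ costs a factor $\sqrt s$, which is too lossy when multiplied by the $O(s)$ coming from the two test vectors. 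Closing the argument cleanly will require exploiting that the mean contrasts lie in $\mathrm{col}(\varDelta)$---the very subspace on which $\varPsi'((\varPsi')^{t}\varSigma_W\varPsi')^{-1}(\varPsi')^{t}$ acts---and routing the $\hat V_A$-error through prediction-type quantities such as $(\hat V_A-\varPsi'_A)^{t}\varSigma_{W,AA}(\hat V_A-\varPsi'_A)$ rather than a crude operator norm, mirroring the two-group analysis of \citet{Mai:2012bf} that the present work extends, together with the boundedness of the signal $\|\varDelta\|_F$ and conditions (C1)--(C2).
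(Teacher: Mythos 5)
Your handling of the first assertion is correct and is exactly what the paper does: take $\epsilon_{N}$ a small constant multiple of $\lambda_{N}$ in part~3 of Theorem~\ref{t:main1} and observe that (C1)--(C2) force both exponential terms to vanish. Your reduction of $h_{\varPsi}$ to $h_{\varPsi'}$ via the Woodbury identity and the invariance of $\varPsi(\varPsi^{t}\varSigma_{W}\varPsi)^{-1}\varPsi^{t}$ under right-multiplication by an invertible matrix is also correct and matches the paper's (tacit) use of this fact.

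For the classification statement, however, you have reproduced the paper's architecture --- pass to the event where $\hat A=A$, where $\bar X_{gA}$, $W_{AA}$, $n_{g}/N$ concentrate, and where $\|\hat V_{A}-\varPsi'_{A}\|_{\infty,2}\le 2\phi\lambda_{N}$; show each sample score $\hat h^{g}$ converges in probability to the population score $h^{g}$; conclude from the strict ordering $h^{1}<h^{g}$ --- but you stop precisely at the step that constitutes the proof, declaring that the quantitative comparison of scores ``will require'' a prediction-type argument without carrying it out. That is a genuine gap: the second half of the corollary is asserted, not proved. The paper closes this step without ever forming an operator norm of the $s\times(G-1)$ error matrix, which is where your $\sqrt{s}$ loss originates. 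It writes each score as $a^{t}\varLambda^{-1}a-2\log\pi_{g}$ with $a=\varPsi_{A}'^{\,t}(x_{A}-\mu_{gA})\in\R^{G-1}$ and $\varLambda=\varPsi_{A}'^{\,t}\varSigma_{WAA}\varPsi'_{A}\in\R^{(G-1)\times(G-1)}$, and bounds $\|\hat a-a\|_{2}$ and $\|\hat\varLambda^{-1}-\varLambda^{-1}\|_{2}$ directly in terms of $\|\hat V_{A}-\varPsi'_{A}\|_{\infty,2}$, $\|W_{AA}-\varSigma_{WAA}\|_{\infty}$ and $\|\varPsi'_{A}\|_{\infty,2}$ using inequalities of the type in Lemma~\ref{l:inf2bound}; since every intermediate matrix is $(G-1)$-dimensional and $G=O(1)$, norm conversions cost only constants, and no projection onto $\mathrm{col}(\varDelta)$ is needed. (Your instinct is not baseless: the one place a dimension factor can enter is $\|x_{A}^{t}(\hat V_{A}-\varPsi'_{A})\|_{2}\le\|x_{A}\|_{1}\,\|\hat V_{A}-\varPsi'_{A}\|_{\infty,2}$, where the paper records $\|x_{A}\|_{\infty}$ instead of $\|x_{A}\|_{1}$; but $x$ is a fixed point here, and scrutinizing that single inequality is a far smaller task than the reworking you propose.) To complete your proof you would need to supply this chain of bounds, or an equivalent one, rather than gesture at it.
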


\section{Simulation Results} \label{sec:sim}
In this section we evaluate the performance of the estimator $\hat V(\lambda)$ defined in~\eqref{eq:Frob2} against the alternative methods proposed in the literature. We refer to our proposal as MGSDA for Multi-Group Sparse Discriminant Analysis.  The results reported here concern the case in which the sample size for each group is $n=100$ and the number of features is $p=100$ and $p=800$. The test datasets are the same size as the training datasets and are generated independently. Conditional on group $g$, the samples are drawn independently from the multivariate normal distribution $N(\mu_{g},\varSigma_{W})$. The following structures for $\varSigma_{W}$ are considered in all the simulations: 
\begin{enumerate}
 \item \textbf{Identity:} $\varSigma_W=I$.
 \item \textbf{Equicorrelation:} $\varSigma_W=(\sigma_{ij})_{p \times p}$ with $\sigma_{ii}=1$ and $\sigma_{ij}=0.5$ for $i \neq j$.
 \item \textbf{Autoregressive:} $\varSigma_W=(\sigma_{ij})_{p \times p}$ with $\sigma_{ij}=0.8^{|i-j|}$ for $1\le i,j\le p$.
\item \textbf{Bernoulli:}.  $\varSigma_W=\Omega^{-1}$ with $\Omega=(B+\delta I)/(1+\delta)$. Here $B=(b_{ij})_{p \times p}$ with $b_{ii}=1$ for $1\le i\le p$, $b_{ij}=b_{ji}=0.5\times Ber(1,0.2)$ for $1 \le i\le s_0$, $i < j \le p$ and $b_{ij}=b_{ji}=0.5$ for $s_0+1 \le i\le p$, $i < j \le p$. $\delta$ is taken as $\delta=\max(-\lambda_{min}(B),0)+0.05$ to ensure that $\Omega$ is positive definite.
\item \textbf{Data Based:} $\varSigma_W=(1-\alpha)S+\alpha I$, where $\alpha=0.01$ and $S$ is a sample correlation matrix estimated from the most variable $p=800$ features of Ramaswamy dataset \citep{Ramaswamy:2001hc}. The dataset is available from \url{http://www-stat.stanford.edu/~tibs/ElemStatLearn/}.
\end{enumerate} 
Structures 1-3 have been used in simulation studies in LDA literature \citep{Cai:2011dm, Witten:2011kc, Mai:2012bf}, and the Bernoulli structure was considered by \citet{Cai:2011dm}. We view the Data Based structure as an approximation to a covariance structure that is more realistic in practical settings.

\subsection{The two-group case}\label{s:G=2}

This simulation scenario considers the classification between the two groups with $\mu_1=0_p$ and $\mu_2=(1_{s},0_{p-s})$ for covariance structures 1-4. For covariance structure 5  we take $\mu_2=(d_{s},0_{p-s})$ with $d$ ranging from 0.1 to 0.5 since  in this case the Bayes error is almost zero for $\mu_2=(1_{s},0_{p-s})$. The simulations are performed for the values of $s=10$ and $s=30$ for structures 1-4 and $s=10$ for structure 5. 

\citet{Mai:2012bf} perform extensive simulations to compare their proposal with the methods of \citet{Wu:2009vk}, \citet{Witten:2011kc}, \citet{Tibshirani:2003bj} and \citet{Fan:2008tu}. In all the settings, the method of \citet{Mai:2012bf} performs the best in terms of misclassification error. Given Proposition \ref{cl:Mai}, we do not compare MGSDA with any of these methods. On the other hand, \citet{Cai:2011dm} also show that their proposal performs the best when compared to \citet{Shao:2011jw}, \citet{Fan:2008tu} and \citet{Tibshirani:2003bj}. To our knowledge, no comparison was performed between the methods of \citet{Mai:2012bf} and \citet{Cai:2011dm}, therefore  in this section we compare our results to the results of  \citet{Cai:2011dm}. We follow the terminology of  \citet{Cai:2011dm} and refer to their method as Linear Programming Discriminant (LPD). We also evaluate the performance of $\tilde \varPsi=\varSigma^{-1}_{W}\varDelta$. We refer to $\tilde \varPsi$ as the Oracle.

We note that the LPD requires additional regularization of the within-group sample covariance matrix: $\tilde W=W+\rho I $. This regularization is needed to generate a feasible starting point for the optimization algorithm. \citet{Cai:2011dm} suggest taking $\rho\le\sqrt{\log p/N}$. In our simulations $N=200$ and therefore $\rho=0.15$ satisfies this requirement for both $p=100$ and $p=800$. We also try $\rho=2$ to examine how the choice of $\rho$ affects the misclassification rate.


The misclassification error rates as percentages over 100 replications for covariance structures 1-4 are reported in Table \ref{t:simg2}. The corresponding number of selected features and the number of false positive features is reported in Table \ref{t:simg2f}. We define the feature $j$ as a false positive if the corresponding component of estimated canonical vector $\hat V$ is non-zero, $\hat v_{j}\neq0$, but $\mu_{1j}-\mu_{2j}=0$. Note that the population canonical vector $\varPsi$ is truly sparse only in the Identity case, it is only approximately sparse in other scenarios. Comparing MGSDA with the best results of LPD show that the methods have similar error rates when the covariance matrix is Identity or Equicorrelation. MGSDA outperforms LPD for the Autoregressive covariance structure, however LPD performs significantly better for the Bernoulli covariance structure when $p=800$. The methods select comparable numbers of features in all scenarios. 

The mean misclassification rates for the Data Based covariance structure are reported in Figure \ref{fig:databG2}. In this case MGSDA performs significantly better than LPD regardless of the choice of $\rho$. The difference in misclassification rates is especially noticeable when the difference in means $d$ is small. 

The error rates of LPD with $\rho=0.15$ and $\rho=2$ are similar for most of the covariance structures, however they are significantly different for Bernoulli structure when $p=800$ and for the Data Based structure. Table~\ref{t:simg2f} reveals that $\rho$ can also have a significant effect on the number of selected features (the difference is especially noticeable when $s=30$ and $p=800$). This suggests that the choice of $\rho$ can significantly affect the performance of the LPD, with smaller values of $\rho$ likely to result in smaller misclassification error. Unfortunately it remains unclear how to choose the optimal $\rho$ in practical settings.

\begin{table}[!h]
\footnotesize
\centering
\caption{Mean misclassification error rates as percentages over 100 replications, $G=2$, standard deviation is given in brackets.}
\label{t:simg2}
\begin{tabular}{|l|c|c|llll|}
  \hline
Covariance & $s$ & $p$ & MGSDA & LPD, $\rho=0.15$ & LPD, $\rho=2$ & Oracle \\ 
  \hline
Identity & 10 & 100 & 6.65(2.07) & 6.75(2.04) & 6.17(1.94) & 5.58(1.89) \\ 
 & 10 & 800 & 7.32(2.09) & 6.84(1.97) & 6.44(1.73) & 5.75(1.56) \\ 
 & 30 & 100 & 0.9(0.77) & 0.67(0.7) & 0.49(0.53) & 0.4(0.5) \\ 
& 30 & 800 & 0.83(0.69) & 1.09(0.86) & 0.46(0.5) & 0.32(0.39) \\ 
  \hline
  Equicorrelation & 10 & 100 & 3.32(1.25) & 3.38(1.7) & 3.02(1.58) & 1.51(0.89) \\ 
  & 10 & 800 & 3.11(1.25) & 2.98(1.39) & 2.79(1.17) & 1.45(0.81) \\ 
& 30 & 100 & 0.55(0.53) & 0.55(0.67) & 0.52(0.73) & 0.06(0.2) \\ 
 & 30 & 800 & 0.27(0.38) & 0.5(0.61) & 0.56(0.77) & 0(0) \\ 
  \hline
  Autoregressive & 10 & 100 & 19.02(2.91) & 20.83(3.17) & 23.88(2.99) & 16.65(2.48) \\ 
 & 10 & 800 & 22.29(3.26) & 23.59(3.35) & 24.5(3.06) & 16.05(2.59) \\ 
& 30 & 100 & 13.72(2.68) & 15.26(2.91) & 15.85(2.84) & 10.97(2.14) \\ 
& 30 & 800 & 16.57(2.71) & 17.23(3.26) & 16.81(2.64) & 11.13(1.95) \\ 
  \hline
  Bernoulli & 10 & 100 & 6.12(1.69) & 5.88(1.48) & 5.75(1.62) & 4.37(1.35) \\ 
 & 10 & 800 & 37.14(6.04) & 17.03(3.4) & 28.62(3.59) & 4.6(1.49) \\ 
 & 30 & 100 & 0.35(0.42) & 0.3(0.38) & 0.22(0.34) & 0.05(0.15) \\ 
  & 30 & 800 & 8.27(2.81) & 3.29(1.4) & 7.17(2.74) & 0.04(0.14) \\ 
   \hline
\end{tabular}

\caption{Mean number of selected features and false positive features over 100 replications, $G=2$, standard deviation is given in brackets.}
\label{t:simg2f}
\begin{tabular}{|l|c|c|lll | lll |}
  \hline
  &&&\multicolumn{3}{c|}{All features}&\multicolumn{3}{c|}{False positives}\\
Covariance & $s$ & $p$ & MGSDA & LPD, $\rho=0.15$ & LPD, $\rho=2$ & MGSDA & LPD, $\rho=0.15$ & LPD, $\rho=2$ \\ 
  \hline
Identity & 10 & 100 & 20(7) & 19(14) & 19(15) & 10(7) & 9(14) & 9(15) \\ 
 & 10 & 800 & 29(16) & 25(26) & 24(29) & 19(16) & 15(26) & 14(29) \\ 
   & 30 & 100 & 40(7) & 38(11) & 34(13) & 11(7) & 9(11) & 4(13) \\ 
 & 30 & 800 & 51(15) & 125(93) & 56(83) & 22(15) & 95(93) & 26(83) \\ 
  \hline
  Equicorrelation & 10 & 100 & 51(5) & 55(12) & 73(10) & 41(5) & 45(12) & 63(10) \\ 
   & 10 & 800 & 84(13) & 90(54) & 128(48) & 74(13) & 80(54) & 118(48) \\ 
  & 30 & 100 & 77(4) & 78(9) & 93(6) & 49(4) & 49(8) & 63(6) \\ 
& 30 & 800 & 147(13) & 112(54) & 177(40) & 119(13) & 84(54) & 147(40) \\ 
  \hline
  Autoregressive & 10 & 100 & 19(6) & 20(14) & 30(21) & 14(6) & 13(13) & 20(21) \\ 
   & 10 & 800 & 32(15) & 23(21) & 27(31) & 26(15) & 16(20) & 17(31) \\ 
   & 30 & 100 & 26(7) & 31(15) & 46(17) & 12(6) & 14(13) & 17(16) \\ 
   & 30 & 800 & 41(20) & 41(51) & 76(92) & 28(19) & 25(49) & 48(91) \\ 
  \hline
  Bernoulli & 10 & 100 & 24(9) & 18(11) & 20(18) & 14(9) & 8(11) & 10(18) \\ 
  & 10 & 800 & 43(33) & 70(83) & 19(14) & 38(31) & 60(83) & 9(14) \\ 
  & 30 & 100 & 43(8) & 39(14) & 33(11) & 14(8) & 9(14) & 3(11) \\ 
& 30 & 800 & 116(32) & 216(117) & 43(18) & 90(31) & 187(116) & 13(18) \\ 
   \hline
\end{tabular}
\end{table}

\begin{figure}[!t]
\centering
\caption{Mean misclassification error rate in percentage over 25 replications for Data Based covariance structure as a function of difference in means $d$, $G=2$.}
\label{fig:databG2}
\makebox{\includegraphics[scale=0.8]{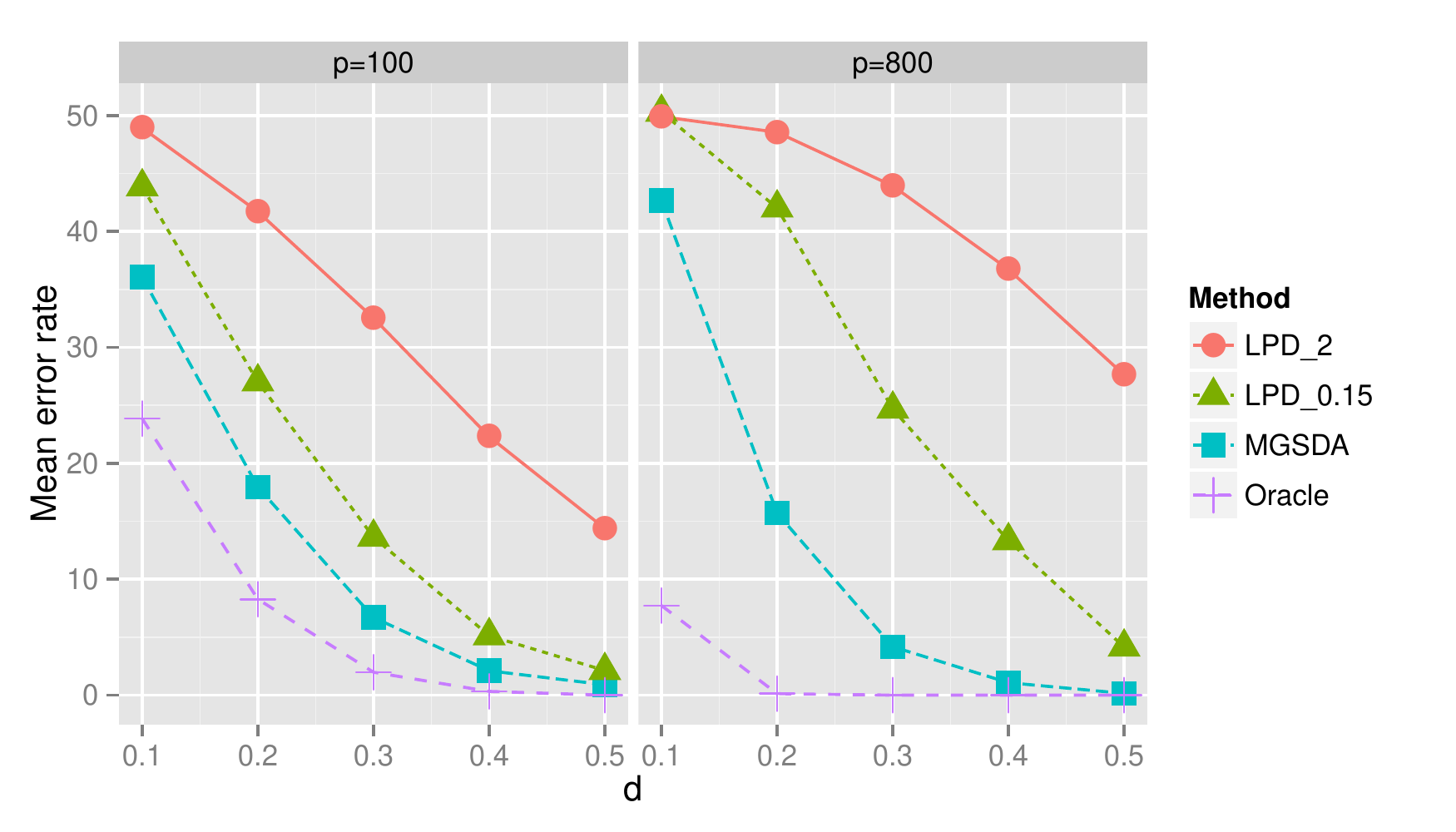}}
\end{figure}

\subsection{The multi-group case}
This simulation scenario considers the classification between the three groups with $\mu_1=0_p$, $\mu_2=(1_{s/2},-1_{s/2},0_{p-s})$ and $\mu_3=(-1_{s/2},1_{s/2},0_{p-s})$. As in the two-group case, we vary the value of $\mu_{2}$ and $\mu_{3}$ for the covariance structure 5: $\mu_2=(d_{s/2},-d_{s/2},0_{p-s})$ and $\mu_3=(-d_{s/2},d_{s/2},0_{p-s})$ with $d$ ranging from 0.1 to 0.5. The simulations are performed for the values of $s=10$ and $s=30$ for structures 1-4 and $s=10$ for structure 5. 

The LPD method of \citet{Cai:2011dm} is developed for the two-group setting. Though it can be generalized to the multi-group case, this generalization is not unique. Among the popular methods are ``one versus one" and ``one versus all" approaches \citep[p. 658]{Hastie:2009fg}. In addition to requiring the computation of a larger number of discriminant vectors ($G(G-1)/2$ and $G$ correspondingly), these approaches can disagree in their classification rules as well as in selected features. Given this ambiguity, we do not compare our method to the LPD in the multi-group case. 

We were able to find only two methods in the literature that specifically consider sparse discriminant analysis in the multi-group case: penalizedLDA by \citet{Witten:2011kc} and sparseLDA by \citet{Clemmensen:2011kr}. Both methods find canonical vectors sequentially and are nonconvex. We compare their performance with MGSDA and also evaluate the performance of $\tilde \varPsi=\varSigma^{-1}_{W}\varDelta$. Again, we refer to $\tilde \varPsi$ as the Oracle.

The mean misclassification error rates as percentages over 100 replications for each combination of parameters are reported in Table \ref{t:simg3}. The number of selected features and the number of false positive features is reported in Table \ref{t:simg3f}. Similar to the two-group case, we define the feature $j$ as a false positive if the corresponding row of estimated canonical vector matrix $\hat V$ is non-zero, $\|\hat v_{j}\|_{2}\neq0$, but $\mu_{1j}=\mu_{2j}=\mu_{3j}$.  Also as before, the population canonical vectors matrix $\varPsi$ is truly row sparse only in the Identity case, it is only approximately row sparse in other scenarios. The results suggest that all three methods are comparable in terms of misclassification rate except for the Autoregressive covariance structure. In this scenario, both MGSDA and sparseLDA outperform the penalizedLDA. In terms of the number of features, MGSDA tends to select fewer than its competitors.   Hence, MGSDA achieves the best tradeoff between the misclassification error and sparsity of the solution.

\begin{table}[!t]
\footnotesize
\centering
\caption{Mean misclassification error rates as percentages over 100 replications, $G=3$, standard deviation is given in brackets.} 
\label{t:simg3}
\begin{tabular}{|l|c|c|llll|}
  \hline
Covariance & $s$ & $p$ & MGSDA & penalizedLDA & sparseLDA & Oracle \\ 
  \hline
Identity & 10 & 100 & 9.11(1.52) & 8.4(1.6) & 9.39(1.68) & 7.83(1.41) \\ 
   & 10 & 800 & 9.22(1.73) & 7.92(1.5) & 9.58(1.87) & 7.67(1.43) \\ 
   & 30 & 100 & 1.06(0.67) & 0.5(0.42) & 0.93(0.63) & 0.42(0.37) \\ 
   & 30 & 800 & 1.43(0.81) & 0.52(0.39) & 1.14(0.72) & 0.43(0.35) \\ 
\hline
  Equicorrelation & 10 & 100 & 2.15(0.97) & 2.34(1.18) & 2.03(0.94) & 1.65(0.86) \\ 
   & 10 & 800 & 2.19(0.89) & 2.12(1.1) & 2.15(0.85) & 1.68(0.84) \\ 
 & 30 & 100 & 0.23(0.38) & 0.3(1.11) & 0.26(0.44) & 0.01(0.05) \\ 
 & 30 & 800 & 0.31(0.43) & 0.04(0.11) & 0.39(0.52) & 0.01(0.06) \\ 
\hline
  Autoregressive & 10 & 100 & 6.83(1.4) & 16.87(2.16) & 6.34(1.32) & 4.87(1) \\ 
 & 10 & 800 & 7.29(1.77) & 16.53(2.44) & 7.47(2.54) & 4.9(1.17) \\ 
 & 30 & 100 & 5.45(1.48) & 16(1.95) & 4.86(1.44) & 3.57(1.05) \\ 
 & 30 & 800 & 5.89(1.53) & 15.46(2.33) & 5.98(2.03) & 3.65(1.05) \\ 
\hline
  Bernoulli & 10 & 100 & 11.15(1.91) & 10.56(1.73) & 11.35(1.88) & 8.51(1.62) \\ 
   & 10 & 800 & 43.13(3.06) & 44.84(4) & 41.72(3.19) & 8.56(1.67) \\ 
 & 30 & 100 & 1.54(0.72) & 1.05(0.56) & 1.37(0.74) & 0.59(0.47) \\ 
  & 30 & 800 & 20.59(3.08) & 22.66(4.11) & 16.42(2.57) & 0.63(0.46) \\ 
   \hline
\end{tabular}

\caption{Mean number of selected features and false positive features over 100 replications, $G=3$, standard deviation is given in brackets.}
\label{t:simg3f}
\begin{tabular}{|l|c|c|lll | lll |}
  \hline
  &&&\multicolumn{3}{c|}{All features}&\multicolumn{3}{c|}{False positives}\\
 Covariance& $s$ & $p$ & MGSDA & penalizedLDA & sparseLDA & MGSDA & penalizedLDA & sparseLDA \\ 
  \hline
Identity & 10 & 100 & 13(7) & 15(15) & 26(11) & 3(7) & 5(15) & 16(11) \\ 
 & 10 & 800 & 11(2) & 15(8) & 24(9) & 1(2) & 5(8) & 14(9) \\ 
& 30 & 100 & 46(18) & 61(19) & 58(11) & 16(18) & 31(19) & 29(11) \\ 
& 30 & 800 & 37(11) & 51(93) & 67(18) & 8(10) & 21(93) & 37(18) \\ 
\hline
  Equicorrelation & 10 & 100 & 14(8) & 10(1) & 27(12) & 4(8) & 0(1) & 17(12) \\ 
 & 10 & 800 & 12(6) & 12(3) & 28(19) & 2(6) & 2(3) & 18(19) \\ 
 & 30 & 100 & 29(11) & 38(12) & 47(8) & 2(10) & 8(12) & 21(6) \\ 
 & 30 & 800 & 29(14) & 30(0) & 50(7) & 4(13) & 0(0) & 25(3) \\ 
\hline
  Autoregressive & 10 & 100 & 21(16) & 12(6) & 27(12) & 13(16) & 2(6) & 20(12) \\ 
& 10 & 800 & 7(3) & 15(11) & 29(19) & 1(2) & 5(11) & 22(19) \\ 
 & 30 & 100 & 28(16) & 54(25) & 39(11) & 13(14) & 24(25) & 22(10) \\ 
& 30 & 800 & 14(5) & 36(40) & 49(24) & 2(4) & 6(40) & 34(23) \\ 
\hline
  Bernoulli & 10 & 100 & 14(10) & 16(15) & 30(13) & 4(10) & 6(15) & 20(13) \\ 
& 10 & 800 & 118(115) & 33(61) & 100(46) & 110(114) & 24(61) & 92(45) \\ 
& 30 & 100 & 51(21) & 59(22) & 61(11) & 22(21) & 29(22) & 31(11) \\ 
 & 30 & 800 & 42(33) & 48(14) & 108(33) & 21(31) & 18(14) & 83(32) \\ 
   \hline
\end{tabular}
\end{table}

The mean misclassification rates for the Data Based covariance structure are reported in Figure~\ref{fig:databG3}. It can be seen that that the penalizedLDA performs significantly worse than both MGSDA and sparseLDA. This is not a surprising result since the Data Based covariance structure is far from diagonal, which is an underlying assumption of penalizedLDA.

\begin{figure}[!t]
\centering
\caption{Mean misclassification error rate in percentage over 25 replications for Data Based covariance structure as a function of difference in means $d$, $G=3$.}
\label{fig:databG3}
\makebox{\includegraphics[scale=0.8]{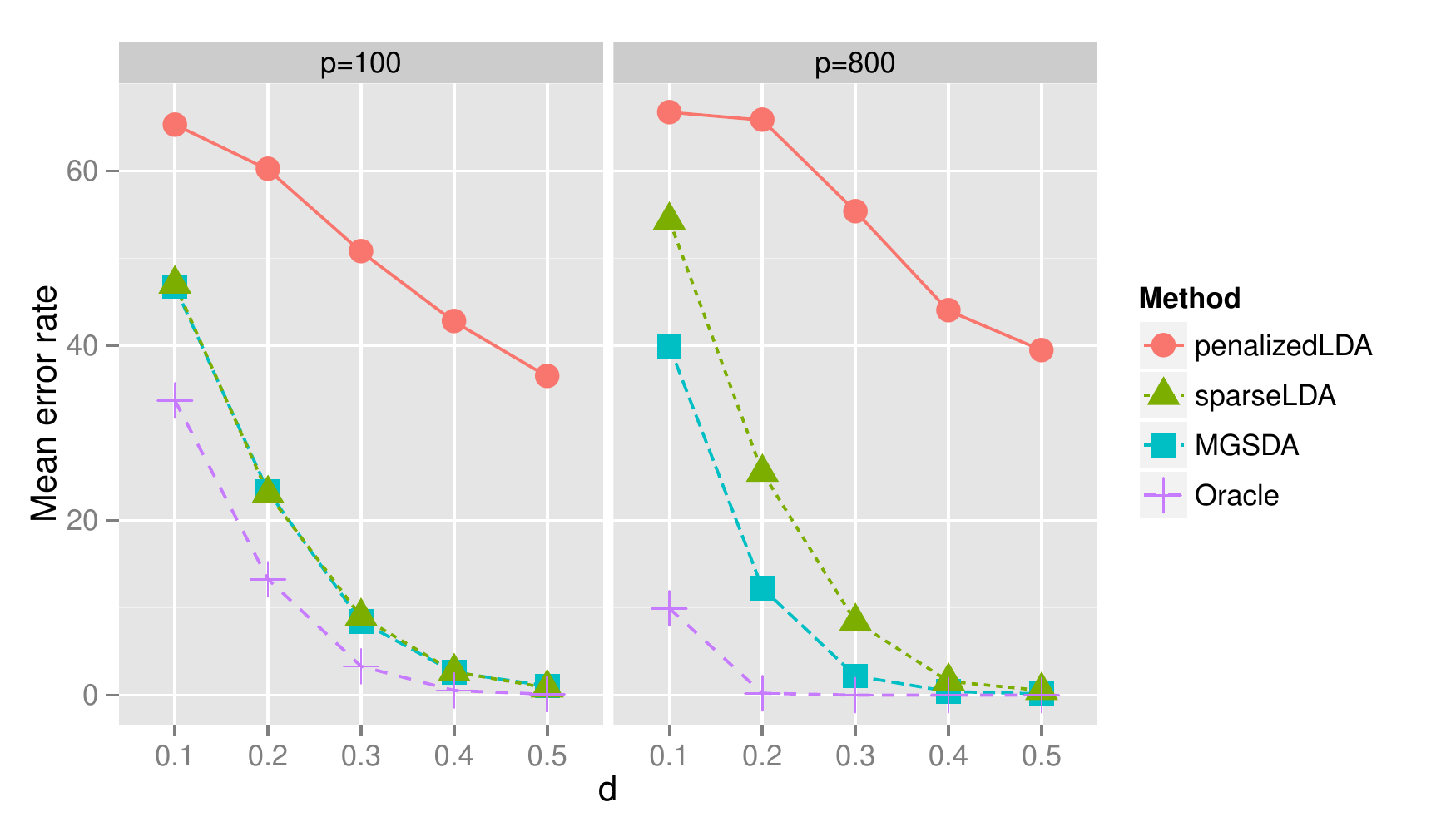}}
\end{figure}

\subsection{Implementation Details}\label{s:implementation}
The method of \citet{Cai:2011dm} is implemented using \textsf{linprogPD} function from the package \textsf{CLIME} from CRAN. Note that \textsf{linprogPD} almost never returns a sparse solution. However, all the values below the precision level should be treated as zeroes \citep{Cai:2011vi}. We used the default value of $10^{-3}$ for precision. The grid for the tuning parameter is chosen from 0.01 to 0.5 by 0.01. The method of \citet{Witten:2011kc} is implemented using the package \textsf{penalizedLDA} from CRAN. The grid for tuning parameter is chosen from 0 to 1 by 0.01. The method of \citet{Clemmensen:2011kr} is implemented using the package \textsf{sparseLDA} from CRAN. Each canonical vector is constrained to have between 3 and 80 features. This is quite a restrictive range for tuning, however the \textsf{sparseLDA} package produced errors when we used a wider range of features.  MGSDA is implemented using the R package \textsf{MGSDA}. The grid for the tuning parameter $\lambda_{1}\le...\le\lambda_{max}$ is chosen adaptively for each dataset with $\lambda_{max}=\max_{j}\|d_{j}\|_{2}$, which corresponds to zero selected features. For each $\lambda_{l}<\lambda_{\max}$ we set $V^{(0)}=\hat V(\lambda_{l+1})$. For all the methods, the final tuning parameter is chosen from the respective grid through 5-fold cross-validation to minimize the error rate.

Witten and Tibshirani's penalizedLDA has significantly faster running time than all other methods since penalizedLDA assumes that the covariance matrix has diagonal structure. This assumption results in a simplified optimization algorithm, for details we refer to \citet{Witten:2011kc}. The running time of penalizedLDA is followed by MGSDA and sparseLDA.  Surprisingly, LPD has the slowest performance. We suspect that this is not due to the method itself, but due to the use of \textsf{linprogPD} function in its implementation. A different linear program solver is likely to result in much faster running time, however the use of a general solver makes the method implementation less straightforward.

\section{Real Data}\label{sec:data}
\subsection{Metabolomics Dataset}
Metabolomics is the global study of all metabolites in a biological system under a given set of conditions.  Metabolites are the final products of enzymes and enzyme networks whose substrates and products often cannot be deduced from genetic information and whose levels reflect the integrated product of the genome, proteome and environment.  Metabolomic readouts thus represent the most direct (or phenotypic) readout of a cellÕs physiologic state.  From a technical standpoint, analytical studies of metabolism have been historically limited to one or a limited set of metabolites. However, advances in liquid chromatography and mass spectrometry have recently made it possible to measure hundreds of metabolites and with enough biomass well over 1000, in parallel.  Such technologies have thus opened the door to obtaining global biochemical readouts of a cellÕs physiologic state and response to perturbation.   Cornell researchers have developed and applied a state-of-the-art metabolomic platform to track the intrabacterial ÔpharmacokineticÕ fates and ÔpharmacodynamicÕ actions of a given compound within Mycobacterium tuberculosis \citep{ Pethe:2010fe, deCarvalho:2010dc, deCarvalho:2011eb, Chakraborty:2013hq}.  These studies demonstrate the highly unpredictable nature and identities of these properties even for well-studied antibiotics.  

We investigate a (currently unpublished) metabolomics data obtained from Dr. Kyu Rhee, which seeks to systematically elucidate the intrabacterial pharmacokinetic and pharmacodynamic fates and actions of antimycobacterial hit or lead compound series identified in high throughput screens against replicating and non- or slowly replicating forms of Mycobacterium tuberculosis. 

The data contains measurements of 171 metabolic responses of 68 patients to 25 antibiotics that are administered at different dosage levels. Each measurement is an average of three replicates, normalized to the vehicle control and log2 transformed. 14 out of 25 antibiotics can be divided into the following 5 groups: STREP\_AMI(strep, ami), FLQ(lev, moxi), DHFR(nitd2, sri8210, sri 8710, sri 8857), DHPS(smx, snl, aps) and InhA(eta,  isoxyl, gsk93). These antibiotics are administered to 35 patients out of 68. In the subsequent analysis we only focus on these 5 groups of antibiotics and do not consider the dosage levels.

We compare the performance of MGSDA, penalizedLDA \citep{Witten:2011kc} and sparseLDA \citep{Clemmensen:2011kr} on this dataset using the following measures: the mean number of misclassified samples and the mean number of selected features over 100 replications of 5-fold cross-validation. We do not perform random splits into the training and test set due to the small sample size. The results are reported in Table~\ref{t:metabolite}.

\begin{table}[!t]
\caption{Mean number of misclassified samples and mean number of selected features over 100 replications on metabolomics dataset, standard deviation is given in brackets.}
\label{t:metabolite}
\begin{center}
\begin{tabular}{|r|lll |}
  \hline
 & MGSDA & penalizedLDA & sparseLDA \\ 
  \hline
CV error& 0.074(0.11) & 0.072(0.11) & 0.014(0.05)  \\ 
 Features& 19(8)& 165(13) &34(12) \\ 
\hline
\end{tabular}
\end{center}
\end{table}
The results show that all three methods perform very well in terms of misclassification error; the mean number of misclassified samples is significantly less than one indicating that all three methods lead to almost perfect classification performance. Such a good performance suggests that there is a significant difference in the metabolic responses between the 5 groups of antibiotics.  However, penalizedLDA achieves this performance by selecting almost all of the metabolites, whereas MGSDA and sparseLDA use less than 20\% of the original features. Note that there is a substantial variation between the replications due to the small sample size of the data. 

\begin{figure}[!t]
\centering
\caption{Metabolomics dataset projected onto 4 column vectors of $V$, $k=8$ metabolite features are used.}
\label{fig:proj}
\makebox{\includegraphics[scale=0.65]{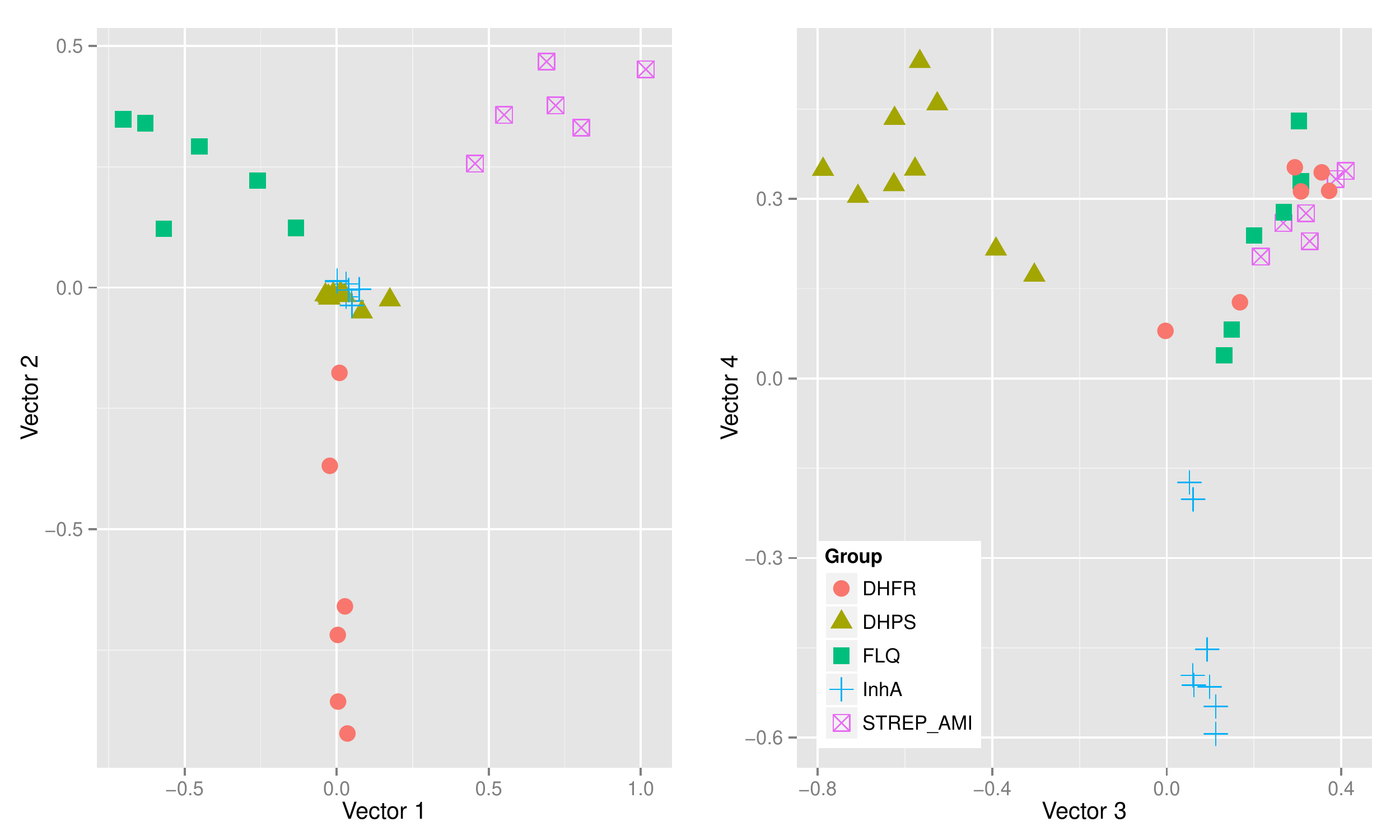}}
\end{figure}

We further estimate four canonical vectors using MGSDA with $\lambda=0.57$ and illustrate the projected data in Figure~\ref{fig:proj}. Note that 8 selected metabolites provide perfect linear separation between the groups. $\lambda=0.57$ is chosen as one of the hundred tuning parameters from above replications of cross-validation splits. We have tried the other values of $\lambda$ as well, however they all provided perfect linear separation between the groups with projected data being very similar to Figure~\ref{fig:proj}. Though there is a variation between the cross-validation replications due to the small sample size of the data, this variation has negligible effect on the final projection.


\subsection{14 Cancer Dataset} 

In this section we compare the performance of MGSDA, penalizedLDA \citep{Witten:2011kc} and sparseLDA \citep{Clemmensen:2011kr} on the 14 cancer dataset by \citet{Ramaswamy:2001hc}. This dataset contains 16063 gene expression measurements collected on 198 samples. Each sample belongs to one of the 14 cancer classes. The dataset can be obtained from \url{http://statweb.stanford.edu/~tibs/ElemStatLearn/}. We selected this dataset as it is publicly available and has been previously analyzed by a number of authors including \citet{Witten:2011kc}.

Following the recommendation of \citet[p.~654]{Hastie:2009fg}, we first standardize the data to have mean zero and standard deviation one for each patient. To reduce the overall computational cost, we restrict the analysis to 3000 genes. We select these genes following the novel model-free feature screening procedure for discriminant analysis of \citet{Cui:2014dg}. Following the approach taken by \citet{Witten:2011kc}, we perform 100 independent splits of the data set into the training set containing 75\% of the samples and the test set containing 25\% of the samples. The tuning parameter for all methods is selected using 5-fold cross-validation on the training set. The mean number of misclassified samples on the test set and the mean number of selected features over 100 splits are reported in Table~\ref{t:14cancer}. MGSDA and sparseLDA perform better than penalizedLDA in terms of the misclassification error and select much smaller number of features. MGSDA and sparseLDA select comparable number of features, with the misclassification error of MGSDA being the smallest.
\begin{table}[!t]
\caption{Mean number of misclassified samples and mean number of selected features over 100 splits on 14 cancer dataset, standard deviation is given in brackets.}
\label{t:14cancer}
\begin{center}
\begin{tabular}{|r|lll|}
  \hline
 & MGSDA & penalizedLDA & sparseLDA \\ 
  \hline
  Error & 7.76(2.35) & 13.41(2.49) & 9.29(2.40) \\ 
  Features & 295(78) & 2962(34) & 293(55) \\ 
   \hline
\end{tabular}
\end{center}
\end{table}

\section{Discussion}\label{sec:disc}
This paper introduces a novel procedure that estimates population canonical vectors in the multi-group setting, and a corresponding R package MGSDA is available on CRAN. The proposed method is a natural generalization of the two-group methods that were previously studied in the literature. In addition to being computationally tractable, the method performs feature selection which results in sparse canonical vectors. The group penalty eliminates features from all canonical vectors at once with the remaining non-zero features being the same for all the vectors.

One possible extension of the proposed method is to allow canonical vectors to have different sparsity patterns. This goal can be achieved through the addition of the within-row penalty term to the objective function \eqref{eq:Frob2}. Such an estimation procedure has already been considered in the regression context; for example, \citet{Simon:2012tg} propose the following optimization problem:
 \begin{equation*}
  \hat \beta(\lambda,\alpha) = \arg \min_{\beta \in \mathbb{R}^p} \frac1{2n}\|Y-X\beta\|^2_2+(1-\alpha)\lambda\sum_{g=1}^{G}\sqrt{p_g}\|\beta^{(g)}\|_2+\alpha\lambda\|\beta\|_1,
 \end{equation*}
 where $p_g$ is the size of group $g$. In our case this approach results in
 \begin{equation*}
 \hat V(\lambda,\alpha)=\arg \min_{V\in\R^{p\times(G-1)} }\frac12\Tr\left(V^tW V\right)+\frac12\|D^tV-I\|^2_F+(1-\alpha) \lambda\sum_{i=1}^p\|v_i\|_2+\alpha \lambda\|V\|_{1}.
 \end{equation*}
 
Another possible extension is to perform canonical vectors selection in addition to feature selection, which will enhance the interpretability, especially when the number of groups $G$ is large.  This goal can be achieved through the addition of the nuclear norm penalty term to the objective function \eqref{eq:Frob2}:
 \begin{equation*}
 \hat V(\lambda,\alpha)=\arg \min_{V\in\R^{p\times(G-1)}}\frac12\Tr\left(V^tW V\right)+\frac12\|D^tV-I\|^2_F+\lambda \sum_{i=1}^p\|v_i\|_2+\alpha \|V\|_{*}.
 \end{equation*}
 Depending on the value of $\alpha>0$, the resulting matrix $\hat V$ has rank that is less than $G-1$, effectively resulting in a lower-dimensional eigenspace.
 
Both extensions result in convex optimization problems, but require additional modifications to the optimization algorithm. An interesting direction for future research is to examine how these extensions compare to the original method in different scenarios.


We established the variable selection and classification consistency of proposed estimator in the regime where $\frac{\log(ps)s^{2}}{N}\to 0$. While preparing this manuscript, we became aware of the work of \citet{Kolar:2013vs}, who show variable selection consistency of the sparse discriminant analysis under the conditions that $G=2$ and $N\ge Cs\log((p-s)\log(N))$ for some constant $C>0$. These improved rates directly apply to our proposal in the case $G=2$,  however the extension of these results to the case $G>2$ is not clear.  This is another direction for future research.



\section*{Appendix A}
\label{sec:Appendix}
\subsection*{Proof of Proposition \ref{cl:BD}}
\begin{proof} The proof is only given for matrix $B$, the proof for matrix $\varSigma_B$ is similar. 

1. Consider the equal group case: $n_1=...=n_G=n$ and $N=Gn$. It follows that $\bar X=\sum_{i=1}^G\bar X_G /G$ and therefore $B=\frac1{N}\sum_{g=1}^Gn(\bar X_g -\bar X)(\bar X_g -\bar X)^t=\frac1{N}X^t\{\frac1{\sqrt{n}}\mathsf{1}_g\}C \{\frac1{\sqrt{n}}\mathsf{1}_g\}^tX$, where $C$ is the centering matrix and $\{\frac1{\sqrt{n}}\mathsf{1}_g\}$ is a $N\times G$ matrix formed by $G$ columns $\frac1{\sqrt{n}}\mathsf{1}_g$ such that $(\mathsf{1}_g)_j=1$ if $j$th observation belongs to the $g$th group and $(\mathsf{1}_g)_j=0$ otherwise. Note that $C=H^tH$ where $H$ is the Helmert matrix of size $G$ with its first row removed \citep{Searle:2006ww}. Therefore $B=DD^{t}$, where $D=\frac1{\sqrt{N}}X^t\{\frac1{\sqrt{n}}\mathsf{1}_g\}H^t$. 

2. Consider the general case where each group has size $n_g$. Similar to the equal group case, $B=\frac1{N}X^t\{\frac1{\sqrt{n_i}}\mathsf{1}_i\}\tilde C \{\frac1{\sqrt{n_i}}\mathsf{1}_i\}^tX$, where $\tilde C=I_G-\frac1{\sqrt N}KK^t$ and $K=(\sqrt{n_1} ... \sqrt{n_G})^t$. Next we show that similar to $C$, $\tilde C$ can be decomposed as $\tilde C=\tilde H^t\tilde H$ and $\tilde H$ is a $G-1 \times G$ adjusted Helmert matrix. Since $\tilde H$ satisfies $I_G-\frac1{\sqrt N}KK^t=\tilde H^t\tilde H$, $(K, \tilde H^t)$ is an orthogonal matrix. The $G-1$ orthogonal contrasts for unbalanced data have the following form \citep[p~51]{Searle:2006ww}: 
\begin{equation*}
  \delta_r=\sqrt{n_{r+1}}\left(\sum_{h=1}^{r}n_h(\bar X_h - \bar X_{r+1})\right).
\end{equation*}
Denote by $h_r$ the rows of $\tilde H$. Then it follows that for some constant $C_{r}$, $h_r\left\{\frac1{\sqrt{n_i}}\mathsf{1}_i\right\}^tX=C_r\delta_r.$
This means that $h_{rj}=C_r\sqrt{n_{r+1} n_j}$ for $j=1,..,r$; $h_{r(r+1)}=-C_r\sum_{i=1}^{r}n_{i}$ and $h_{rj}=0$ for $j>(r+1)$. To find $C_r$, we use the fact that $h_rh_r^t=1$. Let $s_{r}=\sum_{i=1}^{r}n_{i}.$ Then $C_r$ satisfies
$C_r^2\left(\sum_{j=1}^{r}n_{r+1} n_j + s_{r}^2\right)=1$, or equivalently $C_r^2 s_{r+1}s_{r}=1$.
From the last equation $C_r=\frac{1}{\sqrt{s_{r+1} s_{r}}}$. Combining the results it follows that $B=DD^t,$ where $D=\frac1{\sqrt{N}}X^t\left\{\frac1{\sqrt{n_g}}\mathsf{1}_i\right\}\tilde H^t$ and $D_{r}=\frac1{\sqrt{N}}C_{r}\delta_{r}=\frac{\sqrt{n_{r+1}}\left(\sum_{h=1}^{r}n_h(\bar X_h - \bar X_{r+1})\right)}{\sqrt{Ns_{r+1} s_{r}}}$.
\end{proof}

\subsection*{Proof of Proposition \ref{cl:V}}
\begin{proof}
Denote $\varPsi=\varSigma_W^{-1}\Delta \Rho$, where $\Rho$ is an orthogonal matrix such that $\Delta^t\varSigma_W^{-1}\Delta=\Rho\Lambda \Rho^t$. It follows that $\varSigma_W^{-1}\varSigma_B\varPsi=\varSigma_W^{-1}\Delta \Delta^t\varSigma_W^{-1}\Delta \Rho=\varSigma_W^{-1}\Delta \Rho\varLambda=\varPsi\varLambda$.  Hence, $\varPsi$ is the matrix of eigenvectors of $\varSigma_{W}^{-1}\varSigma_{B}$. The proof for $V$ is analogous.
\end{proof}

\subsection*{Proof of Proposition \ref{cl:clas}}
\begin{proof} The proof is only given for the sample classification rule $\hat h_{V}(x)$, the proof for the population classification rule $h_{\varPsi}(x)$ is analogous.
Define $Z=XV$. Using $V$, a new observation $x\in \R^p$ is classified to group $\hat h_{V}(x)$, where
  \begin{equation*}
 \hat h_{V}(x)=\arg \min_{1\le j\le G}(V^tx-\bar Z_j)^t(V^tWV)^{-1}(V^tx-\bar Z_j)-2\log\frac{n_j}{N}.
  \end{equation*} 
Consider a new classification rule $\hat h_{V'}(x)$ based on $V'=VR$ with $R\in\mathbb{O}^{G-1}$. Then $$Z'=XV'=XVR=ZR$$and
  \begin{equation*}
  \begin{split}
 \hat h_{V'}(x)&=\arg \min_{1\le j\le G}(V'^tx-\bar Z'_j)^t(V'^tWV')^{-1}(V'^tx-\bar Z'_j)-2\log\frac{n_j}{N}\\
 &=\arg \min_{1\le j\le G}(R^tV^tx-R^t\bar Z_j)^t(R^tV^tWVR)^{-1}(R^tV^tx-R^t\bar Z_j)-2\log\frac{n_j}{N}\\
 &=\arg \min_{1\le j\le G}(V^tx-\bar Z_j)^tRR^{-1}(V^tWV)^{-1}(R^t)^{-1}R^t(V^tx-\bar Z_j)-2\log\frac{n_j}{N}\\
 &=\hat h_{V}(x).
 \end{split}
  \end{equation*} 
\end{proof}
\subsection*{Proof of Proposition \ref{cl:eigen}}
\begin{proof}
 From the definition of $\varUpsilon_{\rho}$,  $(\varSigma_W+\rho\varSigma_B)^{-1}\rho\varSigma_B\varUpsilon_{\rho}=\varUpsilon_{\rho}\varLambda$. It follows that 
 \begin{equation*}
 \begin{split}
  &\rho\varSigma_B\varUpsilon_{\rho}=\varSigma_W\varUpsilon_{\rho}\varLambda+\rho\varSigma_B\varUpsilon_{\rho}\varLambda;\\
  &\rho\varSigma_B\varUpsilon_{\rho}(I-\varLambda)=\varSigma_W\varUpsilon_{\rho}\varLambda;\\
  &\varSigma_W^{-1}\varSigma_B\varUpsilon_{\rho}=\varUpsilon_{\rho}\frac1{\rho}\varLambda(I-\varLambda)^{-1}.
 \end{split}
 \end{equation*}
 From the last equation it follows that $\varUpsilon_{\rho}$ is the matrix of eigenvectors of $\varSigma_W^{-1}\varSigma_B$.  Since the eigenvectors are unique up to normalization, the statement of the proposition follows.
\end{proof}

\subsection*{Proof of Proposition \ref{cl:Mai}}
\begin{proof}
By definition $D=\frac{\sqrt{n_1n_2}}{N}(\bar X_1-\bar X_2)$. Therefore
\begin{equation*}
\begin{split}
\hat V_{DSDA}(\lambda)&=\arg\min_{V\in\R^{p}} \frac12V^t(W+DD^t)V-\frac{N}{\sqrt{n_1n_2}}D^tV+\lambda\|V\|_1\\
&=\arg\min_{V\in\R^{p}} \frac 12\frac{\sqrt{n_1n_2}}{N}  V^t(W+DD^t)V-D^tV+\frac{\lambda\sqrt{n_1n_2}}{N}\|V\|_1.
\end{split}
\end{equation*}
Define
\begin{equation*}
 f_{DSDA}(V,\lambda)=\frac 12\frac{\sqrt{n_1n_2}}{N} V^t(W+DD^t)V-D^tV+\frac{\lambda\sqrt{n_1n_2}}{N}\|V\|_1.
\end{equation*}
Similarly, $\hat V(\lambda)=\arg \min_{V\in \R^{p}} f(V,\lambda)$, where
\begin{equation*}
 f(V,\lambda)=\frac12V^t(W+DD^t)V-D^tV+\lambda\|V\|_1.
\end{equation*}
Note that
\begin{equation*}
\begin{split}
 f\left(\frac{\sqrt{n_1n_2}}{N}V,\lambda\right)&=\frac{\sqrt{n_1n_2}}{N}\left(\frac12\frac{\sqrt{n_1n_2}}{N}V^t(W+DD^t)V-D^tV+\lambda\|V\|_1\right)\\
 &=\frac{\sqrt{n_1n_2}}{N} f_{DSDA}\left(V,\frac{N}{\sqrt{n_1n_2}}\lambda\right).
 \end{split}
\end{equation*}
It follows that $\hat V(\lambda)=\frac{N}{\sqrt{n_1n_2}}\hat V_{DSDA}\left(\frac{N}{\sqrt{n_1n_2}}\lambda\right)$. 
\end{proof}
\subsection*{Auxillary lemmas for Theorem~\ref{t:main1}}
\begin{lemma}\label{l:inf2bound}
 $\|AB\|_{\infty,2}\le \|A\|_{\infty}\|B\|_{\infty,2}.$
\end{lemma}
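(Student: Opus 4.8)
The plan is to unpack both norms in terms of rows and reduce everything to the triangle inequality. Write $A\in\R^{n\times m}$ with rows $a_1,\dots,a_n$ and $B\in\R^{m\times p}$ with rows $b_1,\dots,b_m$, so that $a_i=(a_{i1},\dots,a_{im})$. The first step is the elementary observation that the $i$th row of the product $AB$ is a linear combination of the rows of $B$ with coefficients given by the $i$th row of $A$, i.e.\ $(AB)_i=\sum_{k=1}^m a_{ik}\,b_k$.

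Next I would estimate the Euclidean norm of this row. By the triangle inequality and homogeneity of $\|\cdot\|_2$,
\[
\|(AB)_i\|_2=\Bigl\|\sum_{k=1}^m a_{ik}b_k\Bigr\|_2\le\sum_{k=1}^m|a_{ik}|\,\|b_k\|_2\le\Bigl(\sum_{k=1}^m|a_{ik}|\Bigr)\max_{1\le k\le m}\|b_k\|_2=\|a_i\|_1\,\|B\|_{\infty,2}.
\]
Then I bound $\|a_i\|_1\le\max_{1\le j\le n}\|a_j\|_1=\|A\|_\infty$, which gives $\|(AB)_i\|_2\le\|A\|_\infty\|B\|_{\infty,2}$ for every $i$. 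Taking the maximum over $i=1,\dots,n$ yields $\|AB\|_{\infty,2}\le\|A\|_\infty\|B\|_{\infty,2}$, as claimed.

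There is no real obstacle here; the only thing to be careful about is keeping the row/column bookkeeping consistent with the paper's notational conventions (rows indexed by the first subscript, $\|\cdot\|_\infty$ being the max $\ell_1$ row norm and $\|\cdot\|_{\infty,2}$ the max $\ell_2$ row norm). If one wanted to phrase it even more slickly, one could note that $\|M\|_{\infty,2}=\max_i\|e_i^tM\|_2$ and $\|M\|_\infty=\max_i\|e_i^tM\|_1$, and then the bound follows from $\|e_i^t(AB)\|_2=\|(e_i^tA)B\|_2\le\|e_i^tA\|_1\,\|B\|_{\infty,2}$, which is just the statement that each entry of $e_i^tA$ scales the corresponding row of $B$; but the direct computation above is already complete.
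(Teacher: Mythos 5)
Your proof is correct: the row expansion $(AB)_i=\sum_k a_{ik}b_k$, the triangle inequality, and the H\"older-type bound $\sum_k|a_{ik}|\,\|b_k\|_2\le\|a_i\|_1\max_k\|b_k\|_2$ are all valid, and the bookkeeping matches the paper's conventions ($\|\cdot\|_\infty$ = max row $\ell_1$ norm, $\|\cdot\|_{\infty,2}$ = max row $\ell_2$ norm). The route is genuinely different from the paper's, though. The paper first rewrites $\|M\|_{\infty,2}$ variationally via duality of the Euclidean norm, $\|M\|_{\infty,2}=\max_i\max_{\|y\|_2\le1}|y^tm_i|=\max_{\|y\|_2\le1}\|My\|_\infty$, and then applies the submultiplicativity $\|ABy\|_\infty\le\|A\|_\infty\|By\|_\infty$, i.e.\ it leans on the standard fact that the max row $\ell_1$ norm is the operator norm induced by $\ell_\infty$. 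Your argument buys self-containedness: it needs nothing beyond the triangle inequality and makes explicit where each factor in the bound comes from. The paper's argument buys modularity: once $\|\cdot\|_{\infty,2}$ is recognized as $\max_{\|y\|_2\le1}\|\cdot\,y\|_\infty$, the claim is an instance of a general pattern (a max over a family of vector norms composed with an operator norm), which is why the paper can cite it as a special case of a lemma in Obozinski et al. Either proof is acceptable; yours is essentially the proof one would give of the induced-norm property that the paper's version takes for granted.
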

\begin{proof}
This inequality is a special case of Lemma 8 in \citet{Obozinski:2011ho}. Note that 
\begin{equation*}
\|A\|_{\infty,2}=\max_i\|a_i\|_2=\max_i\max_{\|y_i\|_2\le 1}|y^ta_i|=\max_{\|y\|_2\le 1}\max_i|y^ta_i|=\max_{\|y\|_2\le 1}\|Ay\|_{\infty}.
\end{equation*}
It follows that 
\begin{equation*}
 \|AB\|_{\infty,2}=\max_{\|y\|_2\le 1}\|ABy\|_{\infty}\le\max_{\|y\|_2\le 1}\|A\|_{\infty}\|By\|_{\infty}=\|A\|_{\infty}\max_{\|y\|_2\le 1}\|By\|_{\infty}=\|A\|_{\infty}\|B\|_{\infty,2}.
\end{equation*}
\end{proof}

\begin{lemma}\label{l:meand}
Let $F=D-\varDelta$. There exists constant $c_{3}>0$ such that
 \begin{equation*}
  P\left(\|F\|_{\infty,2}\ge \epsilon\right)\le 2p(G-1)\exp(-c_3N\epsilon^2).
 \end{equation*}
\end{lemma}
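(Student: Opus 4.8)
The plan is to control $\|F\|_{\infty,2} = \max_{1 \le r \le p} \|f_r\|_2$, where $f_r$ is the $r$th row of $F = D - \varDelta$, by a union bound over the $p$ rows together with a concentration argument for each row. From the closed forms \eqref{eq:varDelta} and \eqref{eq:D}, the $r$th column of $D$ is an explicit linear combination of the group sample means $\bar X_i$, while the $r$th column of $\varDelta$ is the same linear combination of the population means $\mu_i$, with the sample proportions $n_i/N$ in place of $\pi_i$. So each entry $F_{jr}$ is a smooth function of the quantities $\bar X_{ij} - \mu_{ij}$ and $n_i/N - \pi_i$. First I would argue that with overwhelming probability (exponentially small failure probability, absorbed into the constant) the empirical proportions $n_i/N$ are within a fixed neighborhood of $\pi_i$, so that all the denominators $\sqrt{\sum_{i\le r} n_i \sum_{i \le r+1} n_i}$ and prefactors are bounded above and below by constants. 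Conditional on this event, $D_r - \varDelta_r$ is, up to bounded constants, a linear combination of the centered sample means $\bar X_i - \mu_i$ plus lower-order terms coming from the fluctuation of the $n_i/N$; each of these pieces is, conditionally on the group labels, a Gaussian vector.

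Next, for a fixed row index $j$ and fixed contrast index $r$, the entry $F_{jr}$ is (conditionally on the labels) a linear combination of independent Gaussian coordinates $X_{ij} - \mu_{g(i),j}$ with coefficients of order $1/n_{g(i)} \asymp 1/N$, hence a mean-zero Gaussian with variance of order $\sigma^2_{jj}/N \lesssim 1/N$ (using $\varSigma_W$ having bounded diagonal, which we may take as part of the standing normalization). A standard Gaussian tail bound then gives $P(|F_{jr}| \ge \epsilon/\sqrt{G-1}) \le 2\exp(-c N \epsilon^2)$ for a suitable constant. Since $\|f_j\|_2 \ge \epsilon$ forces $|F_{jr}| \ge \epsilon/\sqrt{G-1}$ for some $r \in \{1,\dots,G-1\}$, a union bound over the $G-1$ contrasts and then over the $p$ rows yields
\begin{equation*}
 P\left(\|F\|_{\infty,2} \ge \epsilon\right) \le 2 p (G-1) \exp(-c_3 N \epsilon^2),
\end{equation*}
after folding the label-proportion event and all the bounded constants into $c_3$.

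The main obstacle is handling the randomness of the group sizes $n_g$ cleanly: the coefficients in the linear combination defining $D_r$ are themselves random (they depend on $n_1,\dots,n_G$), so $F_{jr}$ is not exactly Gaussian unconditionally. I would deal with this by conditioning on $(Y_1,\dots,Y_N)$, noting that given the labels the $n_g$ are fixed and $\bar X_g - \mu_g \sim N(0, \varSigma_W/n_g)$ independently across $g$, so $F_{jr}$ is genuinely Gaussian; the bound obtained is uniform over all label configurations for which the proportions $n_g/N$ stay in a fixed compact neighborhood of $(\pi_1,\dots,\pi_G)$ (equivalently $n_g \ge \underline{c}\, N$), and the complementary configurations have total probability at most $G\exp(-cN)$ by a Chernoff/Hoeffding bound, which is again absorbed into the stated form. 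A secondary bookkeeping point is to verify that the difference between the sample prefactor $\sqrt{n_{r+1}}/\sqrt{N s_{r+1} s_r}$ and its population analogue, multiplied by the bounded quantity $\sum_{i\le r} n_i(\bar X_i - \bar X_{r+1})$, contributes only an $O_P(1/\sqrt N)$ term of the same Gaussian-tail type; this is where the explicit forms from Proposition~\ref{cl:BD} do the work.
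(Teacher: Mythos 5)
Your proposal is correct and follows essentially the same route as the paper's: an entrywise Gaussian tail bound for $d_{jr}-\delta_{jr}$ (variance of order $\sigma_j^2/N$), a union bound over the $G-1$ contrasts to pass from entries to the row $\ell_2$-norm via $\|f_j\|_2\ge\epsilon \Rightarrow |f_{jr}|\ge\epsilon/\sqrt{G-1}$ for some $r$ (with the $\sqrt{G-1}$ absorbed into $c_3$ since $G=O(1)$), and a final union bound over the $p$ rows. The only difference is that the paper sidesteps the randomness of the group sizes entirely by working under the balanced assumption $\pi_g=1/G$ with fixed equal $n_g=N/G$, so your conditioning-on-labels step, the Chernoff control of $n_g/N$, and the bookkeeping for the sample-versus-population prefactors are extra generality that the paper's own proof does not attempt.
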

\begin{proof}
From the definition of $\varDelta$ and under the assumption $\pi_{g} = 1/G$, its $r$th column has the form 
$ \varDelta_{r}=\sum_{i=1}^{r}(\mu_i-\mu_{r+1})/\sqrt{Gr(r+1)}.$ Similarly, $D_{r}=\sum_{i=1}^{r}(\bar X_i-\bar X_{r+1})/\sqrt{Gr(r+1)}$. Therefore,
\begin{equation*}
 F_{r}=\frac{1}{\sqrt{Gr(r+1)}}\sum_{i=1}^{r}\big((\bar X_i-\bar X_{r+1})-(\mu_i-\mu_{r+1})\big).
\end{equation*}
Since the groups are independent and $(\bar X_{g})_{j}\sim N\left((\mu_g)_{j},\frac{\sigma^2_j}{n}\right)$ for all $g\in\{1,...,G\}$ and $j\in\{1,...,p\}$, then for all $r$:
 $$\sum_{i=1}^{r}(\bar X_i-\bar X_{r+1})_{j}\sim N\left(\sum_{i=1}^{r}\big(\mu_i-\mu_{r+1})_{j},\frac{r(r+1)\sigma^2_j}{n}\right),$$
 or equivalently
 $$d_{jr}\sim N\left(\delta_{jr},\frac{\sigma^2_j}{Gn}\right),$$
 where $d_{jr}$ are the elements of matrix $D$ and $\delta_{jr}$ are the elements of matrix $\Delta$.
It follows that for all $r\in \{1,...,G-1\}$ and for all $j\in\{1,...,p\}$
\begin{equation*}
P(|f_{jr}|\ge \epsilon)=P(|d_{jr}-\delta_{jr}|\ge \epsilon)\le 2\exp\left(-\frac{N\epsilon^2}{2\sigma^2_j}\right)\le2\exp(-cN\epsilon^{2}).
\end{equation*}
Therefore,
\begin{align*}
P(\|f_j\|_2\ge \epsilon)&=P\left(\sqrt{f_{j1}^2+...+f_{(G-1)j}^2}\ge \epsilon\right)\le P\left(\sqrt{G-1}\max_{r}|f_{jr}|\ge \epsilon\right)\\
&\le P\left(\cup_{r}\left\{|f_{jr}|\ge \frac{\epsilon}{\sqrt{G-1}}\right\}\right)\le (G-1)P\left(|f_{jr}|\ge \frac{\epsilon}{\sqrt{G-1}}\right)\\
&\le 2(G-1)\exp(-c_{3}N\epsilon^{2}).
\end{align*}
The result follows by applying the union bound over $j\in\{1,...,p\}$. 
\end{proof}
\begin{lemma} \label{l:T}
Let $T=W+B$ and $\varSigma=\varSigma_{W}+\varSigma_{B}$. There exist constants $c_{1}>0$ and $c_{2}>0$ such that
\begin{equation*}
\begin{split}
 &P\left(\|T_{AA}-\varSigma_{AA}\|_{\infty}\ge \epsilon\right)\le c_{1}s^2\exp(-c_{2}Ns^{-2}\epsilon^2);\\
 &P\left(\|T_{A^cA}-\varSigma_{A^cA}\|_{\infty}\ge \epsilon\right)\le c_{1}s(p-s)\exp(-c_{2}Ns^{-2}\epsilon^2).\\
\end{split}
\end{equation*}
\end{lemma}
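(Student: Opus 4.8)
The first step is the decomposition $T-\varSigma=(W-\varSigma_W)+(B-\varSigma_B)$ together with the observation that both statements reduce to entrywise concentration. For any matrix $M$ and index set $A$ with $\card(A)=s$ one has $\|M_{AA}\|_{\infty}=\max_{i\in A}\sum_{j\in A}|m_{ij}|\le s\max_{i,j\in A}|m_{ij}|$, so $\{\|T_{AA}-\varSigma_{AA}\|_{\infty}\ge\epsilon\}$ is contained in $\bigcup_{i,j\in A}\{|t_{ij}-\sigma_{ij}|\ge\epsilon/s\}$, and similarly $\{\|T_{A^cA}-\varSigma_{A^cA}\|_{\infty}\ge\epsilon\}\subseteq\bigcup_{i\in A^c,\,j\in A}\{|t_{ij}-\sigma_{ij}|\ge\epsilon/s\}$. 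A union bound over the $s^2$, respectively $s(p-s)$, pairs then yields exactly the asserted expressions, provided one establishes that, for some constants $c_1,c_2>0$,
\begin{equation*}
P\bigl(|t_{ij}-\sigma_{ij}|\ge u\bigr)\le c_1\exp\!\bigl(-c_2Nu^2\bigr),\qquad 1\le i,j\le p,
\end{equation*}
uniformly over $u$ below an absolute constant (the only regime in which the lemma is subsequently used, since $\epsilon/s=o(1)$ under (C1)). By the triangle inequality it suffices to prove the analogous bound for $w_{ij}-(\varSigma_W)_{ij}$ and for $b_{ij}-(\varSigma_B)_{ij}$ separately, and then combine them through $P(|t_{ij}-\sigma_{ij}|\ge 2u)\le P(|w_{ij}-(\varSigma_W)_{ij}|\ge u)+P(|b_{ij}-(\varSigma_B)_{ij}|\ge u)$.

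For the within-group part, conditioning on the labels and using the balanced assumption $\pi_g=1/G$ (as in Lemma~\ref{l:meand}), normality makes $(N-G)W$ a $\mathrm{Wishart}_p(\varSigma_W,N-G)$ matrix, so $w_{ij}-(\varSigma_W)_{ij}$ is an average of $N-G$ i.i.d.\ centered products of jointly Gaussian coordinates and is therefore sub-exponential. A Bernstein-type inequality gives $P(|w_{ij}-(\varSigma_W)_{ij}|\ge u)\le c\exp(-c'N\min(u^2,u))$, which is the required bound once $u$ is bounded by a constant depending only on $\max_j(\varSigma_W)_{jj}$; this is exactly the ingredient behind Lemma~A1 of \citet{Mai:2012bf} (see also \citet{Bickel:2004wa}).

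For the between-group part I would use the closed form of Proposition~\ref{cl:BD}: $b_{ij}=\sum_{r=1}^{G-1}d_{ir}d_{jr}$ and $(\varSigma_B)_{ij}=\sum_{r=1}^{G-1}\delta_{ir}\delta_{jr}$, so with $F=D-\varDelta$,
\begin{equation*}
b_{ij}-(\varSigma_B)_{ij}=\sum_{r=1}^{G-1}\bigl(f_{ir}d_{jr}+\delta_{ir}f_{jr}\bigr).
\end{equation*}
On the event that $|f_{kr}|\le1$ for the $O(G)$ relevant indices, one has $|\delta_{ir}|\le\delta$ and $|d_{jr}|\le\delta+1$, hence $|b_{ij}-(\varSigma_B)_{ij}|\le C_{G,\delta}\max_{k\in\{i,j\},\,r}|f_{kr}|$ for a constant $C_{G,\delta}$ depending only on $G$ and $\delta$. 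Since the proof of Lemma~\ref{l:meand} already provides the per-coordinate Gaussian tail $P(|f_{kr}|\ge\eta)\le 2\exp(-c_3N\eta^2)$, a union bound over the at most $2(G-1)$ relevant coordinates gives $P(|b_{ij}-(\varSigma_B)_{ij}|\ge u)\le c\exp(-c''Nu^2)$ for $u$ bounded; inserting this, together with the within-group bound, into the union bounds of the first paragraph finishes the proof.

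The main obstacle is maintaining the Gaussian tail rate $\exp(-c_2Nu^2)$ throughout: the Wishart entries are only sub-exponential, so the Bernstein bound degrades to $\exp(-c_2Nu)$ once $u$ exceeds a constant, and one has to check --- which follows from (C1)--(C2) --- that the deviation level $\epsilon/s$ at which the lemma is invoked in the proof of Theorem~\ref{t:main1} stays below that threshold, so that $\min(u^2,u)$ may be replaced by $u^2$. The factor $G-1$ coming from Lemma~\ref{l:meand} and from the Wishart concentration bound is absorbed into $c_1$ since $G=O(1)$. Everything else is routine bookkeeping.
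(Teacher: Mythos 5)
Your proof is correct, and while the outer reduction (bounding $\|M_{AA}\|_{\infty}$ and $\|M_{A^cA}\|_{\infty}$ by $s\max_{ij}|m_{ij}|$ and union-bounding over $s^2$, resp.\ $s(p-s)$, entries at level $\epsilon/s$) is exactly what the paper does, your entrywise concentration argument takes a genuinely different route. The paper treats $T=W+B$ as the total sample covariance $\tfrac1N\sum_k(X_k-\bar X)(X_k-\bar X)^t$, expands $\varSigma_{ij}-T_{ij}$ around the true group means, and controls three resulting terms separately: the centered within-group products via Lemma~A.3 of Bickel and Levina, the cross terms $\mu_{gi}(\mu_{gj}-\bar X_{gj})$ via Gaussian tails, and the product $\bar X_i\bar X_j-\mu_i\mu_j$ via a quadratic-plus-linear split. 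You instead split $T-\varSigma$ as $(W-\varSigma_W)+(B-\varSigma_B)$, handle the $W$ part directly through the Wishart/Bernstein bound, and dispatch the $B$ part by exploiting the closed forms $B=DD^t$, $\varSigma_B=\Delta\Delta^t$ of Proposition~\ref{cl:BD} together with the per-coordinate tails of $F=D-\varDelta$ already established in Lemma~\ref{l:meand}. Your route buys two things: it recycles Lemma~\ref{l:meand} so the between-group term costs almost nothing, and it respects the actual $N-G$ normalization of $W$ (the paper's identification of $W+B$ with the $\tfrac1N$-normalized total covariance is only exact up to an $O(G/N)$ discrepancy that it does not comment on). The paper's route is more self-contained in that it never needs the structure of $D$. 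Both arguments share the same caveat, which you correctly flag: the Gaussian-product terms are only sub-exponential, so the $\exp(-c_2Nu^2)$ rate holds only for $u$ below a constant, a restriction already present in the Bickel--Levina lemma the paper invokes and harmless at the deviation level $\epsilon/s=o(1)$ where the lemma is applied.
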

\begin{proof}
 First, we show that $P(|\varSigma_{ij}-T_{ij}|>\epsilon)]\le c_{1}\exp(-c_{2}N\epsilon^2).$
 By definition,
\begin{equation*}
\begin{split}
\varSigma_{ij}-T_{ij}&=\varSigma_{W_{ij}}+\varSigma_{B_{ij}}-\frac1{N}\sum_{k=1}^N(X_{ki}-\bar X_{i})(X_{kj}-\bar X_{j})\\
&=\varSigma_{W_{ij}}+\sum_{g=1}^G\pi_g(\mu_{gi}-\mu_{i})(\mu_{gj}-\mu_{j})-\frac{1}{N}\sum_{k=1}^NX_{ki}X_{kj}+\bar X_{i}\bar X_{j}\\
&=\varSigma_{W_{ij}}+\sum_{g=1}^G\pi_g\mu_{gi}\mu_{gj}+\bar X_{i}\bar X_{j}-\mu_i\mu_j-\frac{1}{N}\sum_{g=1}^G\sum_{k \in I_g}X_{ki}X_{kj}.
\end{split}
\end{equation*}
Furthermore,
\begin{equation*}
\frac1{n_g}\sum_{k\in I_g}X_{ki}X_{kj}=\frac1{n_g}\sum_{k\in I_g}(X_{ki}-\mu_{gi})(X_{kj}-\mu_{gj})+\mu_{gi}(\bar X_{gj}-\mu_{gj})+\mu_{gj}(\bar X_{gi}-\mu_{gi})+\mu_{gj}\mu_{gi}.
\end{equation*}
Therefore
\begin{equation*}
\begin{split}
\varSigma_{ij}-T_{ij}=&\varSigma_{W_{ij}}+\sum_{g=1}^G\pi_g\mu_{gi}\mu_{gj}+\bar X_{i}\bar X_{j}-\mu_i\mu_j-\\
&-\frac{1}{N}\sum_{g=1}^Gn_g\left(\frac1{n_g}\sum_{k\in I_g}( X_{ki}-\mu_{gi})(X_{kj}-\mu_{gj})+\mu_{gi}(\bar X_{gj}-\mu_{gj})+\mu_{gj}(\bar X_{gi}-\mu_{gi})+\mu_{gj}\mu_{gi}\right)\\
=&\sum_{g=1}^G\frac{n_g}{N}\left(\varSigma_{W_{ij}}-\frac1{n_g}\sum_{k\in I_g}( X_{ki}-\mu_{gi})(X_{kj}-\mu_{gj})\right)\\
&+\sum_{g=1}^G\frac{n_g}{N}\left(\mu_{gi}(\mu_{gj}-\bar X_{gj})+\mu_{gj}(\mu_{gi}-\bar X_{gi})\right)+\sum_{g=1}^G\left(\pi_g-\frac{n_g}{N}\right)\mu_{gi}\mu_{gj}+(\bar X_{i}\bar X_{j}-\mu_i\mu_j)
\end{split}
\end{equation*}
Under the assumption $\pi_{g}=\frac1{G}$ and $n_{g}=\frac1{G}$, the above expression is further simplified as
\begin{align*}
\varSigma_{ij}-T_{ij}=&\frac{1}{G}\sum_{g=1}^G\left(\varSigma_{W_{ij}}-\frac1{n_g}\sum_{k\in I_g}( X_{ki}-\mu_{gi})(X_{kj}-\mu_{gj})\right)\\
&+\frac{1}{G}\sum_{g=1}^G\left(\mu_{gi}(\mu_{gj}-\bar X_{gj})+\mu_{gj}(\mu_{gi}-\bar X_{gi})\right)+(\bar X_{i}\bar X_{j}-\mu_i\mu_j)\\
=&I_1+I_2+I_3
\end{align*}
For the final bound it remains to show that  for each $I_j$  there exist constants $c_{1j}>0$ and $c_{2j}>0$ such that $P(|I_j|\ge\epsilon)\le c_{1j}\exp(-c_{2j}N\epsilon^2).$

\textbf{Analysis of $I_{1}$.} From Lemma A.3 in \citet{Bickel:2008wv}, there exist constants $C_{1}>0$ and $C_{2}>0$ such that for $\epsilon<\epsilon_{0}$
\begin{equation*}
P\left(\left|\frac1{n}\sum_{k=1}^{n}(Z_{ik}Z_{jk}-\sigma_{jk})\right|\ge \epsilon\right)\le C_{1}\exp(-C_{2}n\epsilon^{2}),
\end{equation*}
where $Z_{i}$ are i.i.d $N(0, \varSigma)$ and $\sigma_{ij}$ are elements of $\varSigma$.
Let $\tilde Z_{i}=X_{i}-\mu_{i}$, where $\mu_{i}=\mu_{g}$ if observation $i$ belongs to group $g$. By definition $\E(\tilde Z_{i})=0$ and $\tilde Z_{i}$ are i.i.d $N(0,\varSigma_{W})$. Note that $I_{1}$ can be rewritten as
$$I_1=\frac1{N}\sum_{g=1}^G\sum_{k\in I_g}\left(\varSigma_{W_{ij}}-(X_{ki}-\mu_{gi})(X_{kj}-\mu_{gj})\right)=\frac1{N}\sum_{l=1}^{N}(\varSigma_{W_{ij}}-\tilde Z_{li}\tilde Z_{lj}).$$
Therefore $$P(|I_1|\ge \epsilon)\le P\left(\left|\frac1{N}\sum_{l=1}^{N}(\varSigma_{W_{ij}}-\tilde Z_{li}\tilde Z_{lj})\right|\ge \epsilon\right)\le C_{1}\exp(-C_{2}N\epsilon^2).$$
\textbf{Analysis of $I_{2}$.} Dy definition, $I_2=\frac1{G}\sum_{g=1}^G\left(\mu_{gi}(\mu_{gj}-\bar X_{gj})+\mu_{gj}(\mu_{gi}-\bar X_{gi})\right)$. It follows that
\begin{equation*}
|I_2|\le 2\max\left(\left|\frac1{G}\sum_{g=1}^{G}\mu_{gi}(\mu_{gj}-\bar X_{gj})\right|,\left|\frac1{G}\sum_{g=1}^{G}\mu_{gj}(\mu_{gi}-\bar X_{gi})\right|\right).
\end{equation*}
Since the groups are independent, $$\frac1{G}\sum_{g=1}^{G}\mu_{gi}(\mu_{gj}-\bar X_{gj})\sim N\left(0,\frac{\sum_{g=1}^{G}\mu_{gi}^{2}}{NG}\sigma^{2}_{j}\right).$$
Therefore,
\begin{equation*}
P\left(\left|\frac1{G}\sum_{g=1}^{G}\mu_{gi}(\mu_{gj}-\bar X_{gj})\right|\ge \epsilon\right)\le 2\exp(-cN\epsilon^2),
\end{equation*}
hence
\begin{equation*}
P(|I_2|\ge \epsilon)\le 4\exp\left(-\frac{c}{4}N\epsilon^2\right).
\end{equation*}
\textbf{Analysis of $I_{3}$.} Note that
\begin{equation*}
\begin{split}
I_{3}=\bar X_{i}\bar X_{j}-\mu_i\mu_j=&\sum_{g=1}^G\frac{1}{G}\bar X_{gi}\sum_{l=1}^G\frac{1}{G}\bar X_{lj}-\sum_{g=1}^G\frac1{G}\mu_{gi}\sum_{l=1}^G\frac1{G}\mu_{lj}\\
=&\sum_{g=1}^G\frac{1}{G}\left(\bar X_{gi}-\mu_{gi}\right)\sum_{l=1}^G\frac{1}{G}\left(\bar X_{lj}-\mu_{lj}\right)\\
&+\mu_{i}\sum_{l=1}^G\frac{1}{G}(\bar X_{lj}-\mu_{lj})+\mu_{j}\sum_{g=1}^G\frac{1}{G}(\bar X_{gi}-\mu_{gi}).
\end{split}
\end{equation*}
Therefore,
\begin{equation*}
|I_3|\le \max_{t \in\{i,j\}}\left|\frac1{G}\sum_{g=1}^{G}(\bar X_{gt}-\mu_{gt})\right|^2+2\max_g{|\mu_{g}|}\max_{t\in\{i,j\}}\left|\frac1{G}\sum_{g=1}^{G}(\bar X_{gt}-\mu_{gt})\right|.
\end{equation*}
Since the groups are independent, 
 $$\frac1{G}\sum_{g=1}^{G}\bar X_{gj}\sim N\left(\frac1{G}\sum_{g=1}^{G}\mu_{gj},\frac{\sigma^2_j}{nG}\right).$$
Therefore,
\begin{equation*}
P\left(\max_{t\in\{i,j\}}\left|\frac1{G}\sum_{g=1}^{G}(\bar X_{gt}-\mu_{gt})\right|\ge \epsilon\right)\le 2\cdot2\exp(-c_{1}N\epsilon^2).
\end{equation*}
Note that if $\max_{t\in\{i,j\}}\left|\frac1{G}\sum_{g=1}^{G}(\bar X_{gt}-\mu_{gt})\right|\le k\epsilon$, then $|I_{3}|\le k^{2}\epsilon^{2}+2k\epsilon\max_g{|\mu_{g}|}$. Choosing $k\le\frac{1}{\max(\sqrt 2,2\max_{g}\|\mu_{g}\|)}$ leads to
$|I_{3}|\le \frac{\epsilon^{2}+\epsilon}{2}\le \epsilon$ for small values of $\epsilon$. Hence,
\begin{equation*}
P(|I_3|\ge\epsilon)\le 4\exp(-cN\epsilon^2).
\end{equation*}
Combining the results for $I_{1}$-$I_{3}$ leads to $c_{1}=C_{1}+4+4$. The results of lemma follow from the definition of $\|\cdot\|_{\infty}$ and the union bound.
\end{proof}

\begin{lemma}\label{lemma:Tw} Let $F_T=T_{A^cA}(T_{AA})^{-1}-\varSigma_{A^cA}(\varSigma_{AA})^{-1}$. There exists constant $c_{3}>0$ such that 
 \begin{equation*}
  P\left(\|F_T\|_{\infty}\ge \epsilon\phi(\kappa+1)(1-\phi\epsilon)^{-1}\right)\le 6(G+1)ps\exp(-c_3Ns^{-2}\epsilon^2).
 \end{equation*}
\end{lemma}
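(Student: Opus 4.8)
The plan is to view $F_T$ as the perturbation of the product $\varSigma_{A^cA}\varSigma_{AA}^{-1}$ caused by replacing the population second moments by their sample analogues, and to quantify that perturbation using Lemma~\ref{l:T}. Set $E_1=T_{A^cA}-\varSigma_{A^cA}$ and $E_2=T_{AA}-\varSigma_{AA}$. The first step is the algebraic decomposition
\begin{equation*}
F_T=E_1T_{AA}^{-1}-\varSigma_{A^cA}T_{AA}^{-1}E_2\varSigma_{AA}^{-1},
\end{equation*}
which follows from $T_{A^cA}T_{AA}^{-1}-\varSigma_{A^cA}\varSigma_{AA}^{-1}=E_1T_{AA}^{-1}+\varSigma_{A^cA}\bigl(T_{AA}^{-1}-\varSigma_{AA}^{-1}\bigr)$ together with the resolvent identity $T_{AA}^{-1}-\varSigma_{AA}^{-1}=-T_{AA}^{-1}E_2\varSigma_{AA}^{-1}$.

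The second step bounds each term in $\|\cdot\|_{\infty}$ on the event $\mathcal{E}_{\epsilon}=\{\|E_1\|_{\infty}\le\epsilon\}\cap\{\|E_2\|_{\infty}\le\epsilon\}$, using that the maximal-absolute-row-sum norm $\|\cdot\|_{\infty}$ is sub-multiplicative. Factoring $T_{AA}=\varSigma_{AA}(I+\varSigma_{AA}^{-1}E_2)$ and expanding $(I+\varSigma_{AA}^{-1}E_2)^{-1}$ in a Neumann series, which converges on $\mathcal{E}_{\epsilon}$ once $\phi\epsilon<1$, gives $\|T_{AA}^{-1}\|_{\infty}\le\phi(1-\phi\epsilon)^{-1}$, hence $\|E_1T_{AA}^{-1}\|_{\infty}\le\phi\epsilon(1-\phi\epsilon)^{-1}$. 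For the cross term the key manipulation is to route it through $\kappa$: write $\varSigma_{A^cA}T_{AA}^{-1}=(\varSigma_{A^cA}\varSigma_{AA}^{-1})(\varSigma_{AA}T_{AA}^{-1})$ and use $\varSigma_{AA}T_{AA}^{-1}=I-E_2T_{AA}^{-1}$ to get $\|\varSigma_{A^cA}T_{AA}^{-1}\|_{\infty}\le\kappa(1-\phi\epsilon)^{-1}$, so that the bound never involves $\|\varSigma_{A^cA}\|_{\infty}$, which is not among the controlled quantities; this yields $\|\varSigma_{A^cA}T_{AA}^{-1}E_2\varSigma_{AA}^{-1}\|_{\infty}\le\kappa\phi\epsilon(1-\phi\epsilon)^{-1}$. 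Combining the two estimates, on $\mathcal{E}_{\epsilon}$,
\begin{equation*}
\|F_T\|_{\infty}\le\frac{\phi\epsilon}{1-\phi\epsilon}+\frac{\kappa\phi\epsilon}{1-\phi\epsilon}=\epsilon\phi(\kappa+1)(1-\phi\epsilon)^{-1}.
\end{equation*}

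The last step is the probability bound. Under the natural restriction $\phi\epsilon<1$, the previous display shows that $\{\|F_T\|_{\infty}\ge\epsilon\phi(\kappa+1)(1-\phi\epsilon)^{-1}\}\subseteq\{\|E_1\|_{\infty}>\epsilon\}\cup\{\|E_2\|_{\infty}>\epsilon\}$, so the union bound and the two estimates of Lemma~\ref{l:T} give
\begin{equation*}
P\bigl(\|F_T\|_{\infty}\ge\epsilon\phi(\kappa+1)(1-\phi\epsilon)^{-1}\bigr)\le c_1s(p-s)\exp(-c_2Ns^{-2}\epsilon^2)+c_1s^2\exp(-c_2Ns^{-2}\epsilon^2),
\end{equation*}
and since $s(p-s)+s^2=ps$ the right-hand side is at most $c_1ps\exp(-c_2Ns^{-2}\epsilon^2)$, which is dominated by $6(G+1)ps\exp(-c_3Ns^{-2}\epsilon^2)$ for a suitable constant $c_3>0$. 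The only place requiring genuine care is the second step, namely the inverse-perturbation estimates in the correct norm and the decision to express the cross term through $\kappa$ and $\phi$ rather than through $\|\varSigma_{A^cA}\|_{\infty}$; since all the concentration content is already packaged in Lemma~\ref{l:T}, I do not anticipate any real obstacle.
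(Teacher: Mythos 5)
Your proof is correct and is essentially the paper's argument: the paper simply cites Lemma~A2 of \citet{Mai:2012bf}, whose content is exactly your decomposition $F_T=E_1T_{AA}^{-1}-\varSigma_{A^cA}T_{AA}^{-1}E_2\varSigma_{AA}^{-1}$ with the Neumann-series bound $\|T_{AA}^{-1}\|_{\infty}\le\phi(1-\phi\epsilon)^{-1}$ and the routing of the cross term through $\kappa$, combined with the concentration bounds of Lemma~\ref{l:T} via a union bound. The only caveat is the usual bookkeeping in the last step (absorbing $c_1$ into the stated prefactor $6(G+1)$ by adjusting $c_3$), which is the same implicit convention the paper uses.
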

\begin{proof}
The proof follows the proof of Lemma A2 in \citet{Mai:2012bf} and uses the results of Lemma \ref{l:T}.
\end{proof}

\subsection*{Proof of Theorem 1.}
The proof follows the proof of Theorem 1 in \citet{Mai:2012bf} using the results of auxillary lemmas.

\subsection*{Proof of Corollary~1}
\begin{proof}
Follows directly from parts 1 and 2 of Theorem~1.
\end{proof}
\subsection*{Proof of Corollary~2}
\begin{proof}
The first result follows directly from part 3 of Theorem~1. To show the second result, we consider the events
\begin{align*}
\mathcal{E}_{1}&=\cap_{g}\left\{\left|\log \pi_{g}-\log\frac{n_{i}}{N}\right|\le C_{1}\frac1{\sqrt{N}}\right\};\\
\mathcal{E}_{2}&=\left\{\|\hat V_{A}-\varPsi_{A}\|_{\infty,2}\le C_{2}\sqrt{\frac{\log(ps)s^{2}}{N}}\right\};\\
\mathcal{E}_{3}&=\left\{\|W_{A}-\varSigma_{WAA}\|_{\infty}\le C_{3}\sqrt{\frac{\log(s^{2})s^{2}}{N}}\right\};\\
\mathcal{E}_{4}&=\cap_{g}\left\{\|\mu_{gA}-\bar x_{gA}\|_{\infty}\le C_{4}\sqrt{\frac{\log(s)}{N}}\right\};\\
\mathcal{E}_{5}&=\{A=\hat A\},
\end{align*}
where $C_{i}$ are constants independent of $n$, $p$ and $s$ and let $\mathcal{E}=\cap_{i} \mathcal{E}_{i}$. Given a new observation $X\in \R^{p}$ with a value $x$, define for each $g\in \{1,...,G\}$
\begin{align*}
h^{g}=h^{g}(x)&=(x-\mu_{g})^{t}\varPsi'(\varPsi'^{t}\varSigma_{W}\varPsi')^{-1}\varPsi'^{t}(x-\mu_{g})-2\log \pi_{g};\\
\hat h^{g}=\hat h^{g}(x)&=(x-\bar x_{g})^{t}\hat V(\hat V^{t}W\hat V)^{-1}\hat V^{t}(x-\bar x_{g})-2\log \frac{n_{g}}{N}.
\end{align*}
Since the classification rule is invariant to scaling and orthogonal rotation, it follows that population classification rule $h_{\varPsi}(x)=\arg\min_{g}h^{g}(x)$ and the sample classification rule $\hat h_{\hat V}(x)=\arg\min_{g}\hat h^{g}(x)$.
We first prove $\hat h^{g}\overset{P}{\longrightarrow}h^{g}$: there exists constant $C$ such that
on $\mathcal{E}$
  \begin{equation*}
|h^{g}-\hat h^{g}|\le  C\sqrt{\frac{\log(ps)s^{2}}{N}},
\end{equation*}
and $P(\mathcal{E})\to 1$ under (C1) and (C2). Let $a=\varPsi_{A}^{t}(x_{A}-\mu_{gA})$, $\hat a=\hat V_{A}^{t}(x_{A}-\bar x_{gA})$, $\varLambda^{-1}=(\varPsi_{A}\Sigma_{WAA}\varPsi_{A})^{-1}$ and $\hat \varLambda^{-1}=(\hat V_{A}^{t}W_{AA}\hat V_{A})^{-1}$. Consider
\begin{align*}
|h^{g}-\hat h^{g}|&=|a^{t}\varLambda^{-1} a-\hat a\hat \varLambda^{-1}\hat a|\\
&=|(\hat a-a)^{t}(\hat \varLambda^{-1}-\varLambda^{-1})(\hat a-a)+2a^{t}(\hat \varLambda^{-1}-\varLambda^{-1})(\hat a-a)+2a^{t}\varLambda^{-1}(\hat a-a)+a^{t}(\hat \varLambda^{-1}-\varLambda^{-1})a|\\
&\le (\|\hat a-a\|^{2}_{2}+2\|a\|_{2}\|_{2}\|\hat a-a\|_{2}+\|a\|_{2})\|\hat \varLambda^{-1}-\varLambda^{-1}\|_{2}+2\|a\|_{2}\|\varLambda^{-1}\|_{2}\|\hat a-a\|_{2}.
\end{align*}
By definition of $\hat a$ and $a$, on $\mathcal{E}$
\begin{align*}
\|\hat a-a\|_{2}&=\|\varPsi_{A}^{t}(x_{A}-\mu_{gA})-\hat V_{A}^{t}(x_{A}-\bar x_{gA})\|_{2}\\
&\le \|x_{A}^{t}(\varPsi_{A}-\hat V_{A})\|_{2}+\|(\mu_{gA}-\bar x_{gA})^{t}(\varPsi_{A}-\hat V_{A})\|_{2}\\
&\le \|\varPsi_{A}-\hat V_{A}\|_{\infty,2}(\|x_{A}\|_{\infty}+\|\mu_{gA}-\bar x_{gA}\|_{\infty}).
\end{align*}
Therefore, there exists constant $C'$ such that on the event $\mathcal{E}$
\begin{equation*}
\|\hat a-a\|_{2}\le C'\sqrt{\frac{\log(ps)s^{2}}{N}}.
\end{equation*}
By definition of $\hat \varLambda$ and $\varLambda$, $\hat \varLambda=\varLambda+\Upsilon$,
where 
\begin{align*}
\Upsilon&=(\hat V_{A}-\varPsi_{A})^{t}(W_{AA}-\varSigma_{WAA})(\hat V_{A}-\varPsi_{A})+\varPsi_{A}^{t}(W_{AA}-\varSigma_{WAA})(\hat V_{A}-\varPsi_{A})+(\hat V_{A}-\varPsi_{A})^{t}\varSigma_{WAA}(\hat V_{A}-\varPsi_{A})\\
&+\varPsi_{A}^{t}\varSigma_{WAA}(\hat V_{A}-\varPsi_{A})+(\hat V_{A}-\varPsi_{A})^{t}(W_{AA}-\varSigma_{WAA})\varPsi_{A}+\varPsi_{A}^{t}(W_{AA}-\varSigma_{WAA})\varPsi+(\hat V_{A}-\varPsi_{A})^{t}\varSigma_{WAA}\varPsi_{A}.
\end{align*}
Moreover,
\begin{equation*}
\|\hat \varLambda^{-1}-\varLambda^{-1}\|_{2}\le \|\hat \varLambda^{-1}\|_{2}\|\hat \varLambda-\varLambda\|_{2}\|\varLambda^{-1}\|_{2}\le (\|\varLambda^{-1}\|_{2}+\|\hat \varLambda^{-1}-\varLambda^{-1}\|_{2})\|\Upsilon\|_{2}\|\varLambda^{-1}\|_{2}.
\end{equation*}
By triangle inequality
\small
\begin{align*}
\|\Upsilon\|_{2}&\le\|(\hat V_{A}-\varPsi_{A})^{t}(W_{AA}-\varSigma_{WAA})(\hat V_{A}-\varPsi_{A})\|_{2}+2\|\varPsi_{A}^{t}(W_{AA}-\varSigma_{WAA})(\hat V_{A}-\varPsi_{A})\|_{2}\\
&+\|(\hat V_{A}-\varPsi_{A})^{t}\varSigma_{WAA}(\hat V_{A}-\varPsi_{A})\|_{2}+2\|\varPsi_{A}^{t}\varSigma_{WAA}(\hat V_{A}-\varPsi_{A})\|_{2}+\|\varPsi_{A}^{t}(W_{AA}-\varSigma_{WAA})\varPsi_{A}\|_{2}.
\end{align*}
\normalsize
Since $\|A\|_{2}\le \|A\|_{\infty,2}$ and $\|AB\|_{\infty,2}\le\|A\|_{\infty}\|B\|_{\infty,2}$, it follows that on $\mathcal{E}$
\begin{align*}
\|\Upsilon\|_{2}&\le \|\hat V_{A}-\varPsi_{A}\|^{2}_{\infty,2}\|W_{AA}-\varSigma_{WAA}\|_{\infty}+2\|\varPsi_{A}\|_{\infty,2}\|\hat V_{A}-\varPsi_{A}\|_{\infty,2}\|W_{AA}-\varSigma_{WAA}\|_{\infty}\\
&+\|\hat V_{A}-\varPsi_{A}\|^{2}_{\infty,2}\|\varSigma_{WAA}\|_{\infty}+2\|\varPsi_{A}\|_{\infty,2}\|\varSigma_{WAA}\|_{\infty}\|\hat V_{A}-\varPsi_{A}\|_{\infty,2}+\|\varPsi_{A}\|_{\infty,2}^{2}\|W_{AA}-\varSigma_{WAA}\|_{\infty}\\
&\le \|W_{AA}-\varSigma_{WAA}\|_{\infty}\left( \|\hat V_{A}-\varPsi_{A}\|^{2}_{\infty,2}+2\|\varPsi_{A}\|_{\infty,2}\|\hat V_{A}-\varPsi_{A}\|_{\infty,2}+\|\varPsi_{A}\|_{\infty,2}^{2}\right)\\
&+2\|\varPsi_{A}\|_{\infty,2}\|\varSigma_{WAA}\|_{\infty}\|\hat V_{A}-\varPsi_{A}\|_{\infty,2}.
\end{align*}
Therefore, there exists constant $C''$ such that on the event $\mathcal{E}$
\begin{equation*}
\|\Upsilon\|_{2}\le C''\sqrt{\frac{\log(ps)s^{2}}{N}}.
\end{equation*}
If $\|\Upsilon\|_{2}\|\varLambda^{-1}\|_{2}<1$, then
\begin{equation*}
\|\hat \varLambda^{-1}-\varLambda^{-1}\|_{2}\le \|\varLambda^{-1}\|_{2}^{2}\|\Upsilon\|_{2}^{2}(1-\|\Upsilon\|_{2}\|\varLambda^{-1}\|_{2})^{-1}.
\end{equation*}
It follows that on the event $\mathcal{E}$, for each $g\in \{1,...,G\}$ there exists constant $C$ such that
\begin{equation*}
|h^{g}-\hat h^{g}|\le  C\sqrt{\frac{\log(ps)s^{2}}{N}}.
\end{equation*}
Under (C1) and (C2), $P(\mathcal{E}_{1})\to 1$ by Hoeffding inequality, $P(\mathcal{E}_{2})\to 1$ from the first part of Corollary 2, $P(\mathcal{E}_{3})\to 1$ and $P(\mathcal{E}_{4})\to 1$ from the proofs of Lemmas 2 and 3, $P(\mathcal{E}_{5})\to 1$ from Corollary~1. Therefore, $P(\mathcal{E})\to 1$, hence
\begin{equation*}
\hat h^{g}\overset{P}{\longrightarrow}h^{g}.
\end{equation*}
By the Continuous Mapping Theorem \citep[Theorem~2.3]{vanderVaart:2000td}, this implies that for any $g_{1},g_{2}$
\begin{equation*}
\hat h^{g_1}-\hat h^{g_2}\overset{P}{\longrightarrow}h^{g_1}-h^{g_2}.
\end{equation*}
Without loss of generality, assume $h_{\varPsi}(x)=1$. In other words, $h^{1}< h^{g}$ for all $g\in\{2,...,G\}$. Then
\begin{align*}
P\left(\hat h_{\hat V}(x)=h_{\varPsi}(x) \right)&=P\left(\hat h^{1}\le \hat h^{g}\mbox{ for all }g\neq 1\right)\\
&=P\left(h_{1}-h_{g}+\left(\hat h^{1}-\hat h^{g}-(h^{1}-h^{g})\right)\le0\mbox{ for all }g\neq 1\right)\\
&=P\left(\left(\hat h^{1}-\hat h^{g}-(h^{1}-h^{g})\right)\le h^{g}-h^{1}\mbox{ for all }g\neq 1\right).
\end{align*}
Since $\hat h^{1}-\hat h^{g}\overset{P}{\longrightarrow}h^{1}-h^{g}$, it follows that $P\left(\hat h_{\hat V}(x)=h_{\varPsi}(x) \right)\to 1$.
\end{proof}

{\singlespacing
\bibliography{FLDA}

\begin{thebibliography}{}

\bibitem[\protect\citeauthoryear{Bach, Jenatton, Mairal, and Obozinski}{Bach
  et~al.}{2011}]{Bach:2011ty}
Bach, F., R.~Jenatton, J.~Mairal, and G.~Obozinski (2011).
\newblock {\em Optimization for Machine Learning}, Chapter {Convex optimization
  with sparsity-inducing norms}.
\newblock MIT Press.

\bibitem[\protect\citeauthoryear{Bickel and Levina}{Bickel and
  Levina}{2004}]{Bickel:2004wa}
Bickel, P. J.~J. and E.~Levina (2004).
\newblock {Some theory for Fisher's linear discriminant function, `naive
  Bayes', and some alternatives when there are many more variables than
  observations}.
\newblock {\em Bernoulli\/}~{\em 10\/}(6), 989--1010.

\bibitem[\protect\citeauthoryear{Bickel and Levina}{Bickel and
  Levina}{2008}]{Bickel:2008wv}
Bickel, P. J.~J. and E.~Levina (2008).
\newblock {Regularized estimation of large covariance matrices}.
\newblock {\em Annals of Statistics\/}~{\em 36\/}(1), 199--227.

\bibitem[\protect\citeauthoryear{Boyd and Vandenberghe}{Boyd and
  Vandenberghe}{2004}]{Boyd:2004uz}
Boyd, S.~P. and L.~Vandenberghe (2004).
\newblock {\em {Convex Optimization}}.
\newblock Cambridge: Cambridge Univ Press.

\bibitem[\protect\citeauthoryear{Cai and Liu}{Cai and Liu}{2011}]{Cai:2011dm}
Cai, T. and W.~Liu (2011).
\newblock {A direct estimation approach to sparse linear discriminant
  analysis}.
\newblock {\em Journal of the American Statistical Association\/}~{\em
  106\/}(496), 1566--1577.

\bibitem[\protect\citeauthoryear{Cai, Liu, and Luo}{Cai
  et~al.}{2011}]{Cai:2011vi}
Cai, T., W.~Liu, and X.~Luo (2011).
\newblock {A constrained $\ell_1$ minimization approach to sparse precision
  matrix estimation}.
\newblock {\em Journal of the American Statistical Association\/}~{\em
  106\/}(494), 594--607.

\bibitem[\protect\citeauthoryear{Chakraborty, Gruber, Barry, Boshoff, and
  Rhee}{Chakraborty et~al.}{2013}]{Chakraborty:2013hq}
Chakraborty, S., T.~Gruber, C.~E. Barry, H.~I. Boshoff, and K.~Y. Rhee (2013).
\newblock {Para-aminosalicylic acid acts as an alternative substrate of folate
  metabolism in Mycobacterium tuberculosis.}
\newblock {\em Science\/}~{\em 339\/}(6115), 88--91.

\bibitem[\protect\citeauthoryear{Clemmensen, Hastie, Witten, and
  Ersb{\o}ll}{Clemmensen et~al.}{2011}]{Clemmensen:2011kr}
Clemmensen, L., T.~Hastie, D.~M. Witten, and B.~Ersb{\o}ll (2011).
\newblock {Sparse discriminant analysis}.
\newblock {\em Technometrics\/}~{\em 53\/}(4), 406--413.

\bibitem[\protect\citeauthoryear{Cui, Li, and Zhong}{Cui
  et~al.}{2014}]{Cui:2014dg}
Cui, H., R.~Li, and W.~Zhong (2014).
\newblock {Model-Free Feature Screening for Ultrahigh Dimensional Discriminant
  Analysis}.
\newblock {\em Journal of the American Statistical Association\/}, to appear.

\bibitem[\protect\citeauthoryear{de~Carvalho, Darby, Rhee, and
  Nathan}{de~Carvalho et~al.}{2011}]{deCarvalho:2011eb}
de~Carvalho, L. P.~S., C.~M. Darby, K.~Y. Rhee, and C.~Nathan (2011).
\newblock {Nitazoxanide Disrupts Membrane Potential and Intrabacterial pH
  Homeostasis of Mycobacterium tuberculosis.}
\newblock {\em ACS Medicinal Chemistry Letters\/}~{\em 2\/}(11), 849--854.

\bibitem[\protect\citeauthoryear{de~Carvalho, Fischer, Marrero, Nathan, Ehrt,
  and Rhee}{de~Carvalho et~al.}{2010}]{deCarvalho:2010dc}
de~Carvalho, L. P.~S., S.~M. Fischer, J.~Marrero, C.~Nathan, S.~Ehrt, and K.~Y.
  Rhee (2010).
\newblock {Metabolomics of Mycobacterium tuberculosis reveals compartmentalized
  co-catabolism of carbon substrates.}
\newblock {\em Chemistry {\&} Biology\/}~{\em 17\/}(10), 1122--1131.

\bibitem[\protect\citeauthoryear{Dudoit, Fridlyand, and Speed}{Dudoit
  et~al.}{2002}]{Dudoit:2002ev}
Dudoit, S., J.~Fridlyand, and T.~P. Speed (2002).
\newblock {Comparison of discrimination methods for the classification of
  tumors using gene expression data}.
\newblock {\em Journal of the American Statistical Association\/}~{\em
  97\/}(457), 77--87.

\bibitem[\protect\citeauthoryear{Fan and Fan}{Fan and Fan}{2008}]{Fan:2008tu}
Fan, J. and Y.~Fan (2008).
\newblock {High dimensional classification using features annealed independence
  rules}.
\newblock {\em Annals of Statistics\/}~{\em 36\/}(6), 2605--2637.

\bibitem[\protect\citeauthoryear{Fan, Feng, and Tong}{Fan
  et~al.}{2012}]{Fan:2012ud}
Fan, J., Y.~Feng, and X.~Tong (2012).
\newblock {A road to classification in high dimensional space: the regularized
  optimal affine discriminant}.
\newblock {\em Journal of the Royal Statistical Society Ser. B\/}~{\em
  74\/}(4), 745--771.

\bibitem[\protect\citeauthoryear{Friedman}{Friedman}{1989}]{Friedman:1989tm}
Friedman, J.~H. (1989).
\newblock {Regularized discriminant analysis}.
\newblock {\em Journal of the American Statistical Association\/}~{\em
  84\/}(405), 165--175.

\bibitem[\protect\citeauthoryear{Golub and Van~Loan}{Golub and
  Van~Loan}{2012}]{Golub:2012wp}
Golub, G.~H. and C.~F. Van~Loan (2012).
\newblock {\em {Matrix Computations}\/} (4 ed.).
\newblock The Johns Hopkins University Press.

\bibitem[\protect\citeauthoryear{Guo, Hastie, and Tibshirani}{Guo
  et~al.}{2007}]{Guo:2007te}
Guo, Y., T.~Hastie, and R.~Tibshirani (2007).
\newblock {Regularized linear discriminant analysis and its application in
  microarrays}.
\newblock {\em Biostatistics\/}~{\em 8\/}(1), 86--100.

\bibitem[\protect\citeauthoryear{Hastie, Tibshirani, and Friedman}{Hastie
  et~al.}{2009}]{Hastie:2009fg}
Hastie, T., R.~Tibshirani, and J.~H. Friedman (2009).
\newblock {\em {The Elements of Statistical Learning}\/} (2 ed.).
\newblock New York: Springer.

\bibitem[\protect\citeauthoryear{Huang, Breheny, and Ma}{Huang
  et~al.}{2012}]{Huang:2012wg}
Huang, J., P.~Breheny, and S.~Ma (2012).
\newblock {A Selective Review of Group Selection in High-Dimensional Models}.
\newblock {\em Statistical Science\/}~{\em 27\/}(4), 481--499.

\bibitem[\protect\citeauthoryear{Kolar and Liu}{Kolar and
  Liu}{2013}]{Kolar:2013vs}
Kolar, M. and H.~Liu (2013).
\newblock {Optimal feature selection in high-dimensional discriminant
  analysis}.
\newblock {\em arXiv.org\/}~{\em stat.ML\/}(1306.6557v1).

\bibitem[\protect\citeauthoryear{Krzanowski, Jonathan, McCarthy, and
  Thomas}{Krzanowski et~al.}{1995}]{Krzanowski:1995ex}
Krzanowski, W.~J., P.~Jonathan, W.~V. McCarthy, and M.~R. Thomas (1995).
\newblock {Discriminant analysis with singular covariance matrices: methods and
  applications to spectroscopic data}.
\newblock {\em Journal of the Royal Statistical Society Ser. C\/}~{\em
  44\/}(1), 101--115.

\bibitem[\protect\citeauthoryear{Mai and Zou}{Mai and Zou}{2013}]{Mai:2012ea}
Mai, Q. and H.~Zou (2013).
\newblock {A note on the connection and equivalence of three sparse linear
  discriminant analysis methods}.
\newblock {\em Technometrics\/}~{\em 55\/}(2), 243--246.

\bibitem[\protect\citeauthoryear{Mai, Zou, and Yuan}{Mai
  et~al.}{2012}]{Mai:2012bf}
Mai, Q., H.~Zou, and M.~Yuan (2012).
\newblock {A direct approach to sparse discriminant analysis in ultra-high
  dimensions}.
\newblock {\em Biometrika\/}~{\em 99\/}(1), 29--42.

\bibitem[\protect\citeauthoryear{Mardia, Kent, and Bibby}{Mardia
  et~al.}{1979}]{Mardia:1979vm}
Mardia, K.~V., J.~T. Kent, and J.~M. Bibby (1979).
\newblock {\em {Multivariate Analysis}}.
\newblock New York: Academic Press.

\bibitem[\protect\citeauthoryear{McLachlan}{McLachlan}{1992}]{McLachlan:1992dl}
McLachlan, G.~J. (1992).
\newblock {\em {Discriminant Analysis and Statistical Pattern Recognition}}.
\newblock New York: Wiley-Interscience.

\bibitem[\protect\citeauthoryear{Obozinski, Wainwright, and Jordan}{Obozinski
  et~al.}{2011}]{Obozinski:2011ho}
Obozinski, G., M.~J. Wainwright, and M.~I. Jordan (2011).
\newblock {Support union recovery in high-dimensional multivariate regression}.
\newblock {\em Annals of Statistics\/}~{\em 39\/}(1), 1--47.

\bibitem[\protect\citeauthoryear{Pethe, Sequeira, Agarwalla, Rhee, Kuhen,
  Phong, Patel, Beer, Walker, Duraiswamy, Jiricek, Keller, Chatterjee, Tan,
  Ujjini, Rao, Camacho, Bifani, Mak, Ma, Barnes, Chen, Plouffe, Thayalan, Ng,
  Au, Lee, Tan, Ravindran, Nanjundappa, Lin, Goh, Lakshminarayana, Shoen,
  Cynamon, Kreiswirth, Dartois, Peters, Glynne, Brenner, and Dick}{Pethe
  et~al.}{2010}]{Pethe:2010fe}
Pethe, K., P.~C. Sequeira, S.~Agarwalla, K.~Rhee, K.~Kuhen, W.~Y. Phong,
  V.~Patel, D.~Beer, J.~R. Walker, J.~Duraiswamy, J.~Jiricek, T.~H. Keller,
  A.~Chatterjee, M.~P. Tan, M.~Ujjini, S.~P.~S. Rao, L.~Camacho, P.~Bifani,
  P.~A. Mak, I.~Ma, S.~W. Barnes, Z.~Chen, D.~Plouffe, P.~Thayalan, S.~H. Ng,
  M.~Au, B.~H. Lee, B.~H. Tan, S.~Ravindran, M.~Nanjundappa, X.~Lin, A.~Goh,
  S.~B. Lakshminarayana, C.~Shoen, M.~Cynamon, B.~Kreiswirth, V.~Dartois, E.~C.
  Peters, R.~Glynne, S.~Brenner, and T.~Dick (2010).
\newblock {A chemical genetic screen in Mycobacterium tuberculosis identifies
  carbon-source-dependent growth inhibitors devoid of in vivo efficacy.}
\newblock {\em Nature Communications\/}~{\em 1}, 57.

\bibitem[\protect\citeauthoryear{Qin, Scheinberg, and Goldfarb}{Qin
  et~al.}{2013}]{Qin:2013dg}
Qin, Z., K.~Scheinberg, and D.~Goldfarb (2013).
\newblock {Efficient block-coordinate descent algorithms for the Group Lasso}.
\newblock {\em Mathematical Programming Computation\/}~{\em 5\/}(2), 143--169.

\bibitem[\protect\citeauthoryear{Ramaswamy, Tamayo, Rifkin, Mukherjee, Yeang,
  Angelo, Ladd, Reich, Latulippe, Mesirov, Poggio, Gerald, Loda, Lander, and
  Golub}{Ramaswamy et~al.}{2001}]{Ramaswamy:2001hc}
Ramaswamy, S., P.~Tamayo, R.~Rifkin, S.~Mukherjee, C.~H. Yeang, M.~Angelo,
  C.~Ladd, M.~Reich, E.~Latulippe, J.~P. Mesirov, T.~Poggio, W.~Gerald,
  M.~Loda, E.~S. Lander, and T.~R. Golub (2001).
\newblock {Multiclass cancer diagnosis using tumor gene expression signatures.}
\newblock {\em Proceedings of the National Academy of Sciences\/}~{\em
  98\/}(26), 15149--15154.

\bibitem[\protect\citeauthoryear{Rukhin}{Rukhin}{1992}]{Rukhin:1992it}
Rukhin, A.~L. (1992).
\newblock {Generalized Bayes estimators of a normal discriminant function}.
\newblock {\em Journal of Multivariate Analysis\/}~{\em 41\/}(1), 154--162.

\bibitem[\protect\citeauthoryear{Searle}{Searle}{2006}]{Searle:2006ww}
Searle, S.~R. (2006).
\newblock {\em {Linear Models for Unbalanced Data}}.
\newblock Wiley-Interscience.

\bibitem[\protect\citeauthoryear{Seber}{Seber}{1984}]{Seber:1984is}
Seber, G. A.~F. (1984).
\newblock {\em {Multivariate observations}}.
\newblock New York: John Wiley \& Sons, Inc.

\bibitem[\protect\citeauthoryear{Shao, Wang, Deng, and Wang}{Shao
  et~al.}{2011}]{Shao:2011jw}
Shao, J., Y.~Wang, X.~Deng, and S.~Wang (2011).
\newblock {Sparse linear discriminant analysis by thresholding for high
  dimensional data}.
\newblock {\em Annals of Statistics\/}~{\em 39\/}(2), 1241--1265.

\bibitem[\protect\citeauthoryear{Simon, Friedman, Hastie, and Tibshirani}{Simon
  et~al.}{2013}]{Simon:2012tg}
Simon, N., J.~H. Friedman, T.~Hastie, and R.~Tibshirani (2013).
\newblock {A sparse-group lasso}.
\newblock {\em Journal of Computational and Graphical Statistics\/}~{\em 22},
  231--245.

\bibitem[\protect\citeauthoryear{Tibshirani, Hastie, Narasimhan, and
  Chu}{Tibshirani et~al.}{2002}]{Tibshirani:2002ht}
Tibshirani, R., T.~Hastie, B.~Narasimhan, and G.~Chu (2002).
\newblock {Diagnosis of multiple cancer types by shrunken centroids of gene
  expression.}
\newblock {\em Proceedings of the National Academy of Sciences\/}~{\em
  99\/}(10), 6567--6572.

\bibitem[\protect\citeauthoryear{Tibshirani, Hastie, Narasimhan, and
  Chu}{Tibshirani et~al.}{2003}]{Tibshirani:2003bj}
Tibshirani, R., T.~Hastie, B.~Narasimhan, and G.~Chu (2003).
\newblock {Class prediction by nearest shrunken centroids, with applications to
  DNA microarrays}.
\newblock {\em Statistical Science\/}~{\em 18\/}(1), 104--117.

\bibitem[\protect\citeauthoryear{Witten and Tibshirani}{Witten and
  Tibshirani}{2011}]{Witten:2011kc}
Witten, D.~M. and R.~Tibshirani (2011).
\newblock {Penalized classification using Fisher's linear discriminant}.
\newblock {\em Journal of the Royal Statistical Society Ser. B\/}~{\em
  73\/}(5), 753--772.

\bibitem[\protect\citeauthoryear{Wu, Zhang, Wang, Christiani, and Lin}{Wu
  et~al.}{2009}]{Wu:2009vk}
Wu, M.~C., L.~Zhang, Z.~Wang, D.~C. Christiani, and X.~Lin (2009).
\newblock {Sparse linear discriminant analysis for simultaneous testing for the
  significance of a gene set/pathway and gene selection.}
\newblock {\em Bioinformatics\/}~{\em 25\/}(9), 1145--1151.

\bibitem[\protect\citeauthoryear{Xu, Brock, and Parrish}{Xu
  et~al.}{2009}]{Xu:2009fl}
Xu, P., G.~N. Brock, and R.~S. Parrish (2009).
\newblock {Modified linear discriminant analysis approaches for classification
  of high-dimensional microarray data}.
\newblock {\em Computational Statistics {\&} Data Analysis\/}~{\em 53\/}(5),
  1674--1687.

\bibitem[\protect\citeauthoryear{Yuan and Lin}{Yuan and
  Lin}{2006}]{Yuan:2006wj}
Yuan, M. and Y.~Lin (2006).
\newblock {Model selection and estimation in regression with grouped
  variables}.
\newblock {\em Journal of the Royal Statistical Society Ser. B\/}~{\em
  68\/}(1), 49--67.

\end{thebibliography}
\bibliographystyle{chicago}}
\end{document}